\tikzstyle{vertex}=[%
\newcommand\C{{\mathcal C}}
\newcommand\cD{{\mathcal I}}
\newcommand\E{{\mathcal E}}
\newcommand\G{{\mathcal G}}
\def\N{{\mathbb N}}
\def\Q{{\mathcal Q}}
\newcommand\cL{{\mathcal L}}
\def\T{{\mathcal T}}
\newcommand{\D}{D}
\newcommand{\mathset}[1]{\ensuremath {\mathbb {#1}}}
\newcommand{\eps}{\varepsilon}
\newcommand{\R}{\mathset{R}}
\newcommand{\Lacki}{{\L}{\c a}cki\xspace}
\newcommand{\typeROne}{R1}
\newcommand{\typeRTwo}{R2}
\definecolor{ourblue}{RGB}{0,0,255}
\definecolor{ourlightgreen}{RGB}{0,255,0}
\definecolor{ourred}{RGB}{255,0,0}
\definecolor{ourdarkgreen}{RGB}{0,100,0}
\definecolor{ouryellow}{RGB}{255,165,0}
\title{Triangles and Girth in Disk Graphs and 
Transmission Graphs}
\titlerunning{Triangles and Girth in Disk Graphs and Transmission Graphs}
\author{Haim Kaplan}%
{School of Computer Science, Tel Aviv University, Tel~Aviv 69978,
   Israel}%
{haimk@tau.ac.il}%
{}%
{}%
\author{Katharina Klost}%
{Institut f\"ur Informatik, Freie Universit\"at Berlin, 14195 Berlin,
   Germany} {kathklost@inf.fu-berlin.de}%
{}%
{}
\author{Wolfgang Mulzer}%
{Institut f\"ur Informatik, Freie Universit\"at Berlin, 14195 Berlin,
   Germany} {mulzer@inf.fu-berlin.de}%
{0000-0002-1948-5840}%
{Partially supported by ERC STG 757609.}
\author{Liam Roditty}%
{Department of Computer Science, Bar Ilan University, Ramat Gan
   5290002, Israel}%
{liamr@macs.biu.ac.il}%
{}%
{}
\author{Paul Seiferth}%
{Institut f\"ur Informatik, Freie Universit\"at Berlin, 14195 Berlin,
   Germany}%
{pseiferth@inf.fu-berlin.de}%
{}%
{Partially supported by DFG grant MU/3501/1.}
\author{Micha Sharir}%
{School of Computer Science, Tel Aviv University, Tel~Aviv 69978,
   Israel}%
{michas@tau.ac.il}%
{}%
{Partially supported by ISF Grant 892/13, by the Israeli Centers of
   Research Excellence (I-CORE) program (Center No.~4/11), by the
   Blavatnik Research Fund in Computer Science at Tel Aviv University,
   and by the Hermann Minkowski-MINERVA Center for Geometry at Tel
   Aviv University.}
\authorrunning{H. Kaplan, K. Klost, W. Mulzer, L. Roditty,
   P. Seiferth, M. Sharir}
\keywords{disk graph, transmission graph, triangle, girth}
\begin{document}
\nolinenumbers
\maketitle
\begin{abstract}
Let $S \subset \R^2$ be a set of $n$
\emph{sites}, where each $s \in S$ has 
an \emph{associated radius} $r_s > 0$. 
The \emph{disk graph} $\D(S)$ is the 
undirected graph with vertex set $S$ 
and an undirected edge between two 
sites $s, t \in S$ if and only if 
$|st| \leq r_s + r_t$, i.e., if the 
disks with centers $s$ and $t$ and 
respective radii $r_s$ and $r_t$ 
intersect.  Disk graphs are used to 
model sensor networks. Similarly, the 
\emph{transmission graph} $T(S)$ is 
the directed graph with vertex set 
$S$ and a directed edge from a site 
$s$ to a site $t$ if and only if 
$|st| \leq r_s$, i.e., if $t$ lies 
in the disk with center $s$ and 
radius $r_s$.

We provide algorithms for detecting 
(directed) triangles and, more 
generally, computing the length of 
a shortest cycle (the \emph{girth}) 
in $\D(S)$ and in $T(S)$. These 
problems are notoriously hard in 
general, but better solutions exist 
for special graph classes such as 
planar graphs. We obtain similarly 
efficient results for disk graphs 
and for transmission graphs. More 
precisely, we show that a shortest 
(Euclidean) triangle in $\D(S)$ and 
in $T(S)$ can be found in 
$O(n \log n)$ expected time, and that 
the (weighted) girth of $\D(S)$ can 
be found in $O(n \log n)$ expected 
time. For this, we develop new tools 
for batched range searching that may
be of independent interest.
\end{abstract}

\section{Introduction}

Given a graph $G$ with $n$ 
vertices and $m$ edges, does $G$ contain 
a \emph{triangle} (a 
cycle with three vertices)?
This is one of the most basic 
algorithmic questions in graph 
theory, and many other problems 
reduce to it~\cite{ItaiRo78,WiWi18}. The best known algorithms 
use fast matrix multiplication 
and run in either $O(n^{\omega})$ 
time or in $O\big(m^{2\omega/(\omega+1)}\big)$ 
time, where $\omega < 2.37287$ is the 
matrix multiplication 
exponent~\cite{AlonYuZw97,LeGall14,ItaiRo78}.
Despite decades of research, the best available
``combinatorial'' algorithm\footnote{An
algorithm is ``combinatorial'' if it 
does not need algebraic
manipulations to achieve its goal.}
needs 
$O\big(n^3\,\text{polyloglog}(n)/\log^{4}n\big)$ 
time~\cite{Yu15}, only slightly 
better than checking all vertex triples. This lack 
of progress can be explained by a connection 
to Boolean matrix 
multiplication (BMM): 
if 
there is a truly subcubic combinatorial 
algorithm for finding triangles, 
there is also a truly subcubic combinatorial 
algorithm for BMM~\cite{WiWi18}.
Itai and Rodeh~\cite{ItaiRo78} 
reduced  computing the 
\emph{girth} (the length of a shortest 
cycle) of an unweighted undirected 
graph to triangle detection.
For integer edge weights, 
Roditty and V.~Williams~\cite{RodittyVa11} gave 
an equivalence between finding 
a minimum weight cycle (the weighted girth) and 
finding a minimum weight triangle.

For the special case of \emph{planar} graphs,
significantly better algorithms are known. 
Itai and Rodeh~\cite{ItaiRo78} 
and, independently, Papadimitriou and 
Yannakakis~\cite{PapadimitriouYa81} showed that a 
triangle can be found in $O(n)$ time, if it exists. 
Chang and Lu~\cite{ChangLu13} presented 
an $O(n)$ time algorithm for computing the girth.
The weighted girth 
can be found in $O(n \log\log n)$ time 
both in an undirected
and in a directed planar 
graph~\cite{lacki_min-cuts_2011,MozesNiNuWe18}.

In computational geometry, there are two noteworthy 
graph classes that generalize planar graphs:
\emph{disk graphs} and 
\emph{transmission graphs}. We 
are given a set $S$ of $n$ planar point \emph{sites}.
Each $s \in S$ has an 
\emph{associated radius} $r_s > 0$ and an 
\emph{associated disk} $D_s$ with center $s$ 
and radius $r_s$.  The \emph{disk 
graph} $\D(S)$ is the undirected graph 
on $S$ where two sites $s, t \in S$ are adjacent 
if and only if $D_s$ and $D_t$ intersect, i.e.,
$|st| \leq r_s + r_t$, where $|\cdot|$ is the
Euclidean distance.
In a \emph{weighted disk graph}, the edges 
are weighted according to the Euclidean 
distance between their endpoints. The 
\emph{transmission graph} $T(S)$ is the 
directed graph on $S$ where there is an edge from $s$ 
to $t$ if and only if $t$ lies in 
$D_s$, i.e., $|st| \leq r_s$. 
Again, there is a weighted variant.
Both graph classes
have received a lot of attention,
as 
they give simple and natural theoretical models for geometric 
sensor networks (see, e.g.,~\cite{KaplanMuRoSe15,KaplanMuRoSe18}).

Motivated by the vastly better algorithms for 
planar graphs, we investigate triangle detection and
girth computation in disk graphs and transmission graphs.
We will see that in a disk graph, a triangle
can be found in $O(n \log n)$ time, using
a simple geometric
observation to relate disk graphs and planar
graphs. 
By a reduction from \textsc{$\eps$-closeness}~\cite{Polishchuk17},
this is optimal in the
algebraic decision tree model,
a contrast to planar graphs, where $O(n)$ time is 
possible. 
Our method generalizes to finding a
shortest triangle in a weighted disk graph in $O(n \log n)$ expected time.
Moreover, 
we can
compute the unweighted and weighted girth
in a disk graph in $O(n \log n)$ time,
with a deterministic algorithm for the unweighted
case and a randomized algorithm for the weighted
case. The latter result requires a 
method to find a shortest
cycle that contains a given vertex.
Finally, we provide an algorithm to detect a directed triangle
in a transmission graph in $O(n \log n)$ expected time.
For this, we study the geometric properties
of such triangles in more detail, and we develop
several new techniques for batched range searching that
might be of independent interest, using linearized
quadtrees and three-dimensional polytopes to 
test for containment in the union of planar disks. As before, this
algorithm extends to the weighted version.
We will assume \emph{general position},
meaning that all edge lengths (and more generally 
shortest path distances) are pairwise distinct, that 
no site lies on a disk boundary, and that 
all radii are pairwise distinct.

\section{Finding a (Shortest) Triangle in a Disk Graph}
\label{sec:disk_triangle}

We would
like to decide if a given disk graph contains a 
triangle.
If so, we would also like to 
find a triangle of minimum Euclidean perimeter.

\subsection{The Unweighted Case}
\label{sec:disk_triangle_unweighted}
The following property
of disk graphs, due to Evans et al.~\cite{EvansGaLoPo16}, 
is the key to our algorithm.
For completeness, we include a proof.

\begin{lemma}
\label{lem:triangle_planar}
Let $\D(S)$ be a disk graph that is
not plane, i.e., the embedding that 
represents each edge by 
a line segment between its endpoints has two
segments that cross in their relative interiors.  
Then, there are three sites whose associated disks
 intersect in a common point.
\end{lemma}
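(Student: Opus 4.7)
The plan is to case-split on where the crossing point $p$ of the two crossing edges $ab$ and $cd$ lies among the four disks $D_a, D_b, D_c, D_d$. The basic fact I would use first is that any segment $st$ with $|st| \le r_s + r_t$ is entirely covered by $D_s \cup D_t$: for any point $q$ on $st$, if $|qs| > r_s$ then $|qt| = |st| - |qs| < r_t$, so $q$ lies in at least one of the two disks. Applied to both $ab$ and $cd$, this gives $p \in D_a \cup D_b$ and $p \in D_c \cup D_d$, so $p$ is in at least two of the four disks.

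If $p$ lies in both disks of one of the two edges, say $p \in D_a \cap D_b$ (or symmetrically $p \in D_c \cap D_d$), then combined with $p \in D_c \cup D_d$ we already have $p$ in three disks, and we are done with $p$ itself as the witness. Otherwise, after relabeling, I would assume $p \in D_a \cap D_c$ with $p \notin D_b \cup D_d$. Writing $\alpha = |ap|$, $\beta = |bp|$, $\gamma = |cp|$, $\delta = |dp|$, this yields $\alpha \le r_a$, $\gamma \le r_c$ and $\beta > r_b$, $\delta > r_d$.

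In this interesting case, I would introduce two auxiliary witness points. Let $q_{ab}$ be the point on segment $ab$ at distance $r_b$ from $b$; the edge condition $|ab| \le r_a + r_b$ forces $|aq_{ab}| = |ab| - r_b \le r_a$, so $q_{ab} \in D_a \cap D_b$, and along the segment one reads off $|pq_{ab}| = \beta - r_b$. The triangle inequality then gives $|cq_{ab}| \le \gamma + (\beta - r_b)$, so $q_{ab} \in D_a \cap D_b \cap D_c$ whenever $\beta + \gamma \le r_b + r_c$. Defining $q_{cd}$ symmetrically on segment $cd$ at distance $r_d$ from $d$, the analogous sufficient condition $\alpha + \delta \le r_a + r_d$ puts $q_{cd}$ into $D_a \cap D_c \cap D_d$.

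The finisher is a pigeonhole argument: if both sufficient conditions failed, then $\beta + \gamma > r_b + r_c$ and $\alpha + \delta > r_a + r_d$, which sum to $\alpha + \beta + \gamma + \delta > r_a + r_b + r_c + r_d$. But the two edge conditions give $|ab| + |cd| = \alpha + \beta + \gamma + \delta \le r_a + r_b + r_c + r_d$, a contradiction. Hence at least one of the two conditions must hold and we obtain three disks with a common point. The main conceptual hurdle is recognizing that $p$ itself need not witness the triple intersection in the non-trivial case, and that the two natural substitute points, namely the boundary-entry points on each crossing segment, give sufficient conditions that are complementary via the combined edge-length bound.
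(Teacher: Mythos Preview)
Your proof is correct and takes a genuinely different route from the paper's. The paper fixes coordinates (placing one endpoint at the origin, the crossing edge on the $x$-axis, and assuming $r_u \le r_s$), and then runs a geometric case analysis on which quadrant the other endpoint $u$ lies in, exploiting that $D_u \subseteq D_s$ in the relevant half-plane or quadrant to locate a triple-intersection point along one of the segments. Your argument is coordinate-free and purely metric: you pick the two natural boundary-entry points $q_{ab}$ and $q_{cd}$ and show by a pigeonhole on the inequality $|ab|+|cd| \le r_a+r_b+r_c+r_d$ that at least one of the two sufficient conditions must hold. This is shorter, more symmetric, and avoids the quadrant casework; the paper's version is more pictorial but needs the WLOG setup and the containment observations. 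Both yield the same conclusion with essentially the same strength.
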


\begin{figure}
\begin{center}
\includegraphics{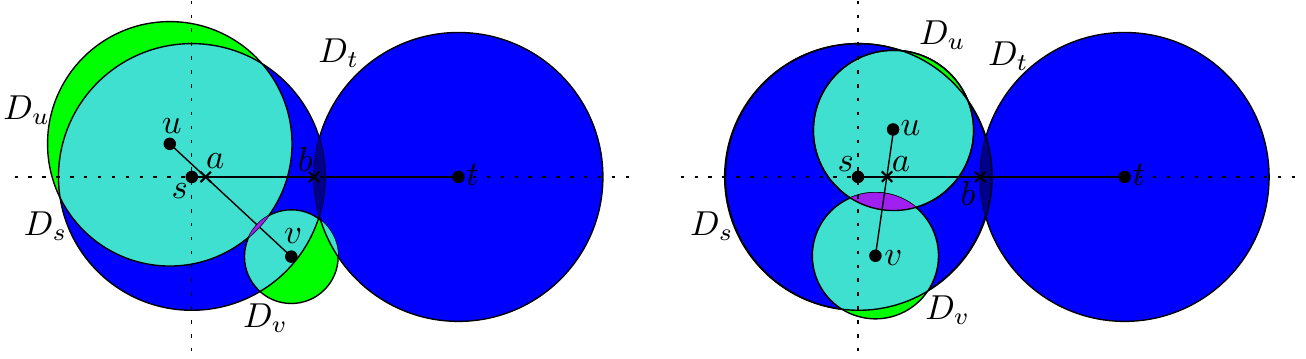}
\end{center}
\caption{If $\D(S)$ is not plane,
then three disks intersect in a common
point. We distinguish two cases, depending on 
whether $u$ lies in the northwest or in the 
northeast quadrant.}
\label{fig:triangle_planar}
\end{figure}
\begin{proof}
Suppose the segments $st$ and
$uv$ intersect in a point $a$. The
sites $s$, $t$, $u$, and $v$ are
pairwise distinct, and without loss
of generality, we assume that:
(i) $a \in D_s \cap D_u$;
(ii) $r_u \leq r_s$;
(iii) the point $s$ lies in the origin, the edge $st$ lies on the 
$x$-axis, with $t$ in the positive halfplane; and
(iv) the site $u$ lies above the $x$-axis,
the site  $v$ lies below the $x$-axis;
see Figure~\ref{fig:triangle_planar}.

If $a \in D_t$, then $D_s \cap D_t \cap D_u \neq \emptyset$,
and we are done. Thus, suppose that $a \not\in D_t$, and 
let $b$ be the first point on $st$ in $D_t$.
If $b \in D_u$, then $D_s \cap D_t \cap D_u \neq \emptyset$,
and we are done. Thus, suppose that $b \not\in D_u$.
If $u$ lies in the northwest quadrant, 
then $v$ must be in the southeast quadrant. Furthermore, 
since $r_u \leq r_s$, it follows that in the southeast 
quadrant, $D_u$ is completely contained in $D_s$, so 
the first point on the segment $av$ that is in $D_v$ must 
also be in $D_s$ and $D_u$. Thus,
$D_s \cap D_u \cap D_v \neq \emptyset$, and we are done.
If $u$ lies in the northeast quadrant, since $r_u \leq r_s$ 
and since $b \not\in D_u$, it follows that 
below the $x$-axis, we have $D_u \subseteq D_s$, and 
the first point on the segment $av$ that is in $D_v$ must 
also be in $D_s$ and $D_u$, i.e.
$D_s \cap D_u \cap D_v \neq \emptyset$.
\end{proof}

If $\D(S)$ is not plane, 
it contains a triangle by
Lemma~\ref{lem:triangle_planar}.
If $\D(S)$ is plane, we can construct it 
explicitly 
and then search for a triangle in
$O(n)$ time~\cite{ItaiRo78,PapadimitriouYa81}.  
To check whether $\D(S)$ is plane, 
we begin an explicit construction of $\D(S)$ and 
abort if we discover too many edges.

\begin{theorem}
\label{thm:unweightedtriangle}
 Let $\D(S)$ be a disk graph on $n$ sites.
 We can find a triangle in  
 $\D(S)$ in $O(n \log n)$ worst-case time, if it
 exists.
\end{theorem}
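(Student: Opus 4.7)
The plan is to dichotomize on whether the straight-line drawing of $\D(S)$ is plane. First, I would incrementally enumerate the edges of $\D(S)$, using any output-sensitive disk-intersection reporter (for instance, a plane sweep over the disk boundaries), and abort as soon as $3n-5$ edges have been produced; with a standard reporter each edge is produced in $O(\log n)$ time, so this phase runs in $O(n \log n)$ time in either outcome.

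If the enumeration terminates with at most $3n-6$ edges, then $\D(S)$ has $O(n)$ edges in total and, by Lemma~\ref{lem:triangle_planar} together with the same edge-count bound, its straight-line drawing is a plane graph. I would then construct it explicitly and apply the linear-time planar triangle detection of Itai and Rodeh~\cite{ItaiRo78} or Papadimitriou and Yannakakis~\cite{PapadimitriouYa81}.

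Otherwise, the enumeration is aborted with a subgraph $H$ of $\D(S)$ having more than $3n-6$ edges on $n$ vertices. By Euler's formula, $H$ is topologically non-planar, so its straight-line drawing---whose segments are a subset of those of $\D(S)$---must contain at least one pair of properly crossing edges. I would locate one such crossing pair $(st, uv)$ in $O(n \log n)$ time by running Bentley--Ottmann on the $O(n)$ segments of $H$ and halting at the first reported intersection. The case analysis in the proof of Lemma~\ref{lem:triangle_planar} is constructive: from $(st,uv)$ it singles out in $O(1)$ time a triple among $\{s,t,u,v\}$ whose three disks share a common point, and that triple is a triangle of $\D(S)$.

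The main obstacle is the first phase: we require an edge enumeration for $\D(S)$ that is both output sensitive and admits early termination at $\Theta(n)$ edges, so that the total time stays $O(n \log n)$ even when $\D(S)$ has $\Theta(n^2)$ edges overall. Standard plane-sweep based disk-intersection reporters provide exactly this guarantee, and the other two phases are routine.
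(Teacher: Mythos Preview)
Your high-level plan is the paper's plan, but two of your steps do not go through as written.

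First, ``a plane sweep over the disk boundaries'' does not enumerate all edges of $\D(S)$: it reports pairs whose boundary circles cross, but it misses pairs where one disk lies entirely inside another, and those are still edges. For a concrete failure, take one large disk that strictly contains two tiny disks whose boundaries cross; your sweep reports only the single edge between the two tiny disks, you fall into the sparse branch with a one-edge graph, and output ``no triangle''---yet the three disks form a triangle. The paper handles this by first sweeping the boundary arcs and stopping at $6n-12$ crossings; if that cap is not exceeded, the arrangement of the circles has linear complexity, so its vertical decomposition can be built in $O(n\log n)$ time and traversed to pick up all containment edges as well.

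Second, the inference ``at most $3n-6$ edges, hence by Lemma~\ref{lem:triangle_planar} the straight-line drawing is plane'' is invalid. Lemma~\ref{lem:triangle_planar} says only that a crossing forces a triangle; it says nothing about sparse disk graphs being crossing-free. Five pairwise-overlapping disks together with $n-5$ isolated disks give a disk graph with $10\le 3n-6$ edges (for $n\ge 6$) whose drawing is not plane and which is not even abstractly planar, so you cannot legitimately invoke the planar triangle detectors of~\cite{ItaiRo78,PapadimitriouYa81} for their $O(n)$ bound. The paper closes this gap with one additional sweep: after assembling at most $3n-6$ edges, it plane-sweeps those edge segments; if a crossing is found, a triangle is read off constructively via Lemma~\ref{lem:triangle_planar}, and only if no crossing exists is the planar-graph routine applied.
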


\begin{proof}
For each \(s\in S\) we split $\partial D_s$ into two $x$-monotone
curves, namely the upper and the lower arc from the
leftmost point of $D_s$ to the rightmost point.
We use the Bentley-Ottmann sweepline algorithm to
find the intersections between these boundary 
arcs. The intersections are reported one 
by one, and the total time to
find the first $m$ intersections is
$O(n\log n + m\log n)$~\cite[Theorem~2.4]{dBCvKO}.\footnote{The 
algorithm is presented for line segments, but it 
extends easily to continuous $x$-monotone curves.}
If the sweepline algorithm
reports more than
$6n - 12$ intersection points, we can be sure
that $\D(S)$ is not plane, because an edge of $\D(S)$ 
corresponds to at most two intersections. 
Then, $\D(S)$ contains a triangle
by Lemma~\ref{lem:triangle_planar}, 
and we can find it
in $O(n \log n)$ time with 
another plane sweep that gives an intersection
between the edges we have generated so far.

If there are at most $6n - 12$ intersections, 
we use another plane sweep to find the 
vertical decomposition of the arrangement of the
disks $D_s$, $s \in S$, in $O(n \log n)$ time. We use the 
vertical decomposition to
construct the remaining edges of
$\D(S)$ that are due 
to a disk being completely contained in another
disk. For this, we walk through the pseudo-trapezoids of
the vertical decomposition, keeping track of the disks
that contain the current pseudo-trapezoid. When we enter
a disk for the first time, we generate edges between
this disk and the disks containing the current pseudo-trapezoid. 
If it turns out that $\D(S)$ has more than $3n - 6$
edges, we abort the generation of the edges, since 
then $\D(S)$ is not plane and contains a 
triangle that can be found in $O(n \log n)$ time 
with a plane sweep over the edges generated so far.
If $\D(S)$ has at most $3n - 6$ edges, 
we obtain an explicit representation of $\D(S)$.
We check if this representation is plane
in $O(n \log n)$ time with a plane sweep, returning
a triangle if this is not the case.
Finally, if $\D(S)$ is plane, we check for a 
triangle in $O(n)$ time~\cite{ItaiRo78,PapadimitriouYa81}.  
\end{proof}

\subsection{The Weighted Case}
\label{sec:disk_triangle_weighted}

Suppose 
the edges in $\D(S)$ are weighted by their
Euclidean lengths. We would like to find
a triangle 
of minimum perimeter, 
i.e., of 
minimum  total 
edge length.
For this, we solve
the decision problem: given 
$W > 0$, does $\D(S)$ contain a triangle with
perimeter at most $W$? Once a decision algorithm
is available, 
the optimization problem  can be solved with
Chan's  randomized geometric
optimization framework~\cite{Chan99}. 

To decide if $\D(S)$ contains a
triangle with perimeter at most $W$, we 
use a grid with diameter
$W/3$.
We look for triangles whose vertices
lie in a single grid cell,
using the algorithm from
Section~\ref{sec:disk_triangle_unweighted}.
If no cell contains
such a triangle, then $\D(S)$
will be sparse and we will need to check only
$O(n)$ further triples. Details follow.

Set $\ell = W/(3\sqrt{2})$.
Let $G_1$ be the grid whose cells are
pairwise disjoint, axis-parallel squares
with side length $\ell$, 
aligned
so that the origin $(0,0)$ is a vertex
of $G_1$.  The cells of $G_1$
have diameter $\sqrt{2} \cdot \ell = W/3$, so 
any triangle whose vertices lie in a single 
cell has perimeter at most $W$.
We make three additional copies
$G_2$, $G_3$, $G_4$ of $G_1$, and we shift
them by $\ell/2$ 
in the $x$-direction, in the $y$-direction, and in both
the $x$- and $y$-directions, respectively.
In other words, $G_2$ has $(\ell/2, 0)$ as a vertex,
$G_3$ has $(0, \ell/2)$ as a vertex, and
$G_4$ has $(\ell/2,\ell/2)$ as a vertex,
see Figure~\ref{fig:grids}. This ensures that if 
all edges in a triangle are ``short'', the triangle lies in a single 
grid cell.

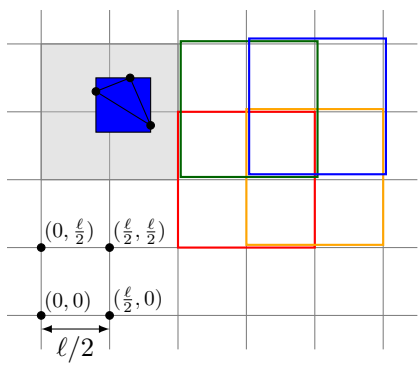
\begin{figure}
\begin{center}
%!tex root = ../cycles.tex
\begin{tikzpicture}[>=latex, scale=0.9]
\foreach \i in {-2,...,3}
{
	\draw[gray] (\i,4.5) -- (\i,-0.5);
}
\foreach \i in {0,...,4}
{
	\draw[gray] (3.5,\i) -- (-2.5, \i);
}

\node[vertex, label={[inner sep=0pt, scale=0.8]above right:\((0,0)\)}] at (-2,0){};
\node[vertex, label={[inner sep=0pt, scale=0.8]above right:\((\frac{\ell}{2},0)\)}] at (-1,0){};
\node[vertex, label={[inner sep=0pt, scale=0.8]above right:\((0,\frac{\ell}{2})\)}] at (-2,1){};
\node[vertex, label={[inner sep=0pt, scale=0.8]above right:\((\frac{\ell}{2},
    \frac{\ell}{2})\)}] at (-1,1){};

\draw[<->] (-2,-0.2) --node[label={[below=0.1cm]\(\ell/2\)}]{} (-1,-0.2);

\path[fill=lightgray, draw=gray, fill opacity=0.4](-2,2) -- (0,2) -- (0,4) -- (-2,4) --cycle;
\coordinate(c1) at (-0.4, 2.8);
\coordinate(c2) at (-1.2,3.3){};
\coordinate(c3) at (-0.7,3.5){};
\draw[fill=ourblue] (-1.2, 3.5) -- (-1.2,2.7) -- (-0.4,2.7) -- (-0.4,3.5) -- cycle;
\node[vertex](v1) at (c1){};
\node[vertex](v2) at (c2){};
\node[vertex](v3) at (c3){};
\draw(v1) -- (v2) -- (v3) -- (v1);

\coordinate(g11) at (0,1);
\coordinate(g12) at (2,1);
\coordinate(g13) at (2,3);
\coordinate(g14) at (0,3);
\draw[ourred, thick] (g11) -- (g12) -- (g13) -- (g14) -- cycle;

\coordinate(g21) at (1,1.04);
\coordinate(g22) at (3,1.04);
\coordinate(g23) at (3,3.04);
\coordinate(g24) at (1,3.04);

\draw[ouryellow, thick] (g21) -- (g22) -- (g23) -- (g24) -- cycle;

\coordinate(g31) at (0.04,2.04);
\coordinate(g32) at (2.04,2.04);
\coordinate(g33) at (2.04,4.04);
\coordinate(g34) at (0.04,4.04);

\draw[ourdarkgreen, thick] (g31) -- (g32) -- (g33) -- (g34) -- cycle;

\coordinate(g41) at (1.04,2.08);
\coordinate(g42) at (3.04,2.08);
\coordinate(g43) at (3.04,4.08);
\coordinate(g44) at (1.04,4.08);

\draw[ourblue, thick] (g41) -- (g42) -- (g43) -- (g44) -- cycle;
\end{tikzpicture}
\end{center}
\caption{The four shifted grids, with a cell
from each grid shown in red, orange, green, and
blue, respectively. Every square
with side length at most $\ell/2$ is wholly contained
in a single grid cell.}
\label{fig:grids}
\end{figure}

\begin{lemma}\label{lem:trianglegridcell}
Let $\Delta$ be a triangle formed by three
vertices $a,b,c \in \R^2$ such that
each edge of $\Delta$ has length at most
$\ell/2$.  There is a cell
$\sigma \in \bigcup_{i = 1}^4 G_i$ with
$a, b, c \in \sigma$.
\end{lemma}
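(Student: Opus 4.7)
The plan is to reduce the statement to the geometric claim that any axis-aligned square of side $\ell/2$ is contained in some cell of $G_1 \cup G_2 \cup G_3 \cup G_4$. First I would observe that the axis-aligned bounding rectangle $B$ of $\Delta$ has width and height both bounded by the length of the longest edge of $\Delta$, since the horizontal (respectively vertical) extent of $B$ equals the maximum difference of $x$-coordinates (respectively $y$-coordinates) among $a,b,c$, which is at most the maximum Euclidean distance between a pair of vertices. By hypothesis, this is at most $\ell/2$, so $B$ fits inside an axis-aligned square $Q$ of side $\ell/2$ (for instance, the one sharing the lower-left corner of $B$).

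Next I would show $Q \subseteq \sigma$ for some $\sigma \in \bigcup_{i=1}^4 G_i$. Write the lower-left corner of $Q$ as $(x_0,y_0)$ and set
\[
u = x_0 - \ell\lfloor x_0/\ell\rfloor,\qquad v = y_0 - \ell\lfloor y_0/\ell\rfloor,
\]
so $u,v \in [0,\ell)$ are the offsets of $(x_0,y_0)$ inside its $G_1$ cell. The square $Q$ is contained in this $G_1$ cell iff $u \le \ell/2$ and $v \le \ell/2$. I would handle the remaining three cases symmetrically: if $u > \ell/2$ and $v \le \ell/2$, then $(x_0,y_0)$ lies in a $G_2$ cell with horizontal offset $u - \ell/2 \in [0,\ell/2)$, so $Q$ fits in that cell; the cases covered by $G_3$ (when $u \le \ell/2$, $v > \ell/2$) and $G_4$ (when $u > \ell/2$, $v > \ell/2$) are analogous. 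Hence at least one of the four grids contains a cell holding $Q$, and therefore $a, b, c$.

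The argument is essentially a pigeonhole on the two coordinates of $(x_0,y_0) \bmod \ell$, so there is no real obstacle; the only thing to be careful about is that the shifts are exactly $\ell/2$ (half the side length of a cell), which is precisely what guarantees that a square of side $\ell/2$ always has a grid aligned with it. Choosing $\ell = W/(3\sqrt{2})$ plays no role here beyond the earlier step of bounding edge lengths.
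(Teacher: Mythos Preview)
Your proof is correct and follows essentially the same approach as the paper's: enclose $\Delta$ in an axis-aligned square of side $\ell/2$ and argue that any such square lies in a single cell of one of the four shifted grids. The paper's proof is terser---it simply asserts the containment claim with a reference to the figure---whereas you spell out the case analysis on the fractional coordinates $(u,v) = (x_0 \bmod \ell,\ y_0 \bmod \ell)$, which is exactly the pigeonhole argument the figure is meant to convey.
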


\begin{proof}
We can enclose $\Delta$ with
a square of side length $\ell/2$. 
This square must be completely contained in
a cell of one of the four grids,
see Figure~\ref{fig:grids}.
\end{proof}

We go
through all nonempty grid cells
$\sigma \in \bigcup_{i=1}^4 G_i$, and we search
for a triangle in the disk graph $\D(S \cap \sigma)$
induced by the sites in $\sigma$, with
Theorem~\ref{thm:unweightedtriangle}.  
Since each site lies in $O(1)$ 
grid cells, and since we can compute the grid
cells for a given site in $O(1)$ time (using the floor function),
the total running time is $O(n \log n)$.
If a triangle 
is found, we return
YES, since the
cells have diameter $W/3$ and thus such a triangle
has perimeter at most $W$. If no triangle is found,
Lemma~\ref{lem:trianglegridcell} implies
that any triangle in $\D(S)$ has one side of length 
more than $\ell/2$ and hence at least one vertex with 
associated radius at least $\ell/4$.
We call a site $s \in S$ \emph{large} if $r_s > \ell/4$.
A simple volume argument bounds the number of large sites
in a grid cell.
\begin{lemma}
\label{lem:volumearg}
Let $\sigma \in  \bigcup_{i = 1}^4 G_i$
be a nonempty grid cell, and suppose
that $D(S \cap \sigma)$ does not contain a triangle.
Then $\sigma$ contains at most $18$ large sites.
\end{lemma}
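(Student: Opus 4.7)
The plan is to exploit the fact that each grid cell has side length $\ell$ while large sites have radius greater than $\ell/4$. I would subdivide the cell $\sigma$ into a $3 \times 3$ array of axis-parallel subcells, each of side length $\ell/3$ and hence of diameter $\sqrt{2}\,\ell/3$. Since $9 \times 2 = 18$, it would suffice to show that no subcell contains more than two large sites.

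To see this, I would observe that any two large sites $s, t$ lying in the same subcell satisfy
\[
|st| \leq \frac{\sqrt{2}}{3}\,\ell < \frac{\ell}{2} < r_s + r_t,
\]
so $st$ is an edge of $\D(S)$. Consequently, any three large sites falling into a common subcell would be pairwise adjacent and form a triangle of $\D(S \cap \sigma)$, contradicting the hypothesis of the lemma. Therefore each of the nine subcells holds at most two large sites, yielding the desired bound of $18$.

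The only thing one has to be careful about is the numerical verification $\sqrt{2}/3 < 1/2$, which is immediate, so there is no real obstacle here. The argument is essentially a standard packing/volume-style estimate, relying only on the defining inequalities of the disk graph together with the relationship between the subcell diameter and the radius threshold chosen for ``large''.
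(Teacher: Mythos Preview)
Your argument is correct and essentially identical to the paper's: both subdivide $\sigma$ into a $3\times 3$ grid of subcells of side length $\ell/3$, observe that their diameter $(\sqrt{2}/3)\ell < \ell/2$ forces any two large sites in a common subcell to be adjacent, and conclude via pigeonhole that at most $18$ large sites can be present without creating a triangle. The only cosmetic difference is that the paper phrases it as a contradiction starting from $19$ sites, whereas you argue the bound directly.
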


\begin{proof}
Suppose $\sigma$ contains at least
$19$ large sites.
We cover $\sigma$ with $3 \times 3$
congruent squares of side length
$\ell/3$. Then,
at least
one square $\tau$ contains at least
$\lceil 19/9 \rceil = 3$
large sites. The associated
radius of a large site is more than
$\ell/4$ and each square has diameter
$(\sqrt{2}/3)\ell < \ell/2$, 
so the large sites in $\tau$ form a
triangle in $D(S \cap \sigma)$, a contradiction.
\end{proof}

Let $\sigma \in G_i$, $i \in \{1,\dots, 4\}$, be a grid
cell.  The
\emph{neighborhood} $N(\sigma)$ of $\sigma$
is the $5 \times 5$ block of cells in $G_i$
centered at $\sigma$. Since
the diameter of a grid cell is $W/3$, any two sites
$u, v \in S$ that form a triangle of perimeter at most $W$ 
with a site $s \in S \cap \sigma$ must
be in $N(\sigma)$.
Let $S_\ell \subseteq S$ denote the large sites.
At this stage, we know that any triangle in $\D(S)$ has at 
least one vertex in $S_\ell$.
By Lemma~\ref{lem:volumearg}, for any 
$\sigma \in  \bigcup_{i = 1}^4 G_i$,
we have
$|\cup_{\tau \in N(\sigma)} \tau \cap S_\ell| = O(1)$.
Thus, to detect a triangle of perimeter
at most $W$ with at least two large vertices, we
proceed as follows:
for each non-empty cell $\sigma \in G_i$, iterate over all 
large sites $s$ in $\sigma$, over all large sites  $t$ in $N(\sigma)$, 
and over all (not necessary large) sites $u$ in  $N(\sigma)$.
Check whether $stu$ is a triangle of perimeter at most $W$.
If so, return YES.
Since the sites in 
each grid cell are examined $O(1)$ times for $O(1)$ pairs of 
large sites, 
the total time is $O(n)$.

It remains to detect triangles of perimeter at most $W$ with 
exactly one large vertex.
We iterate over all grid cells
$\sigma \in \bigcup_{i=1}^4 G_i$, and we compute 
$\D(S \cap \sigma)$. Since $\D(S \cap \sigma)$ 
contains no triangle, Lemma~\ref{lem:triangle_planar} 
shows that $\D(S \cap \sigma)$ is plane, has 
$O(|S \cap \sigma|)$ edges and can be
constructed in time $O(|S \cap \sigma| \log |S \cap \sigma|)$.
For every edge $st \in \D(S \cap \sigma)$ with both endpoints 
in $S \setminus S_\ell$, we iterate over all large sites $u$ 
in $N(\sigma)$ and we test whether  $stu$ makes a triangle in 
$\D(S)$ with perimeter at most $W$. If so, we return YES. 
By Lemma~\ref{lem:volumearg}, this takes $O(|S \cap \sigma|)$ 
time, so 
the total running time is $O(n \log n)$.
If there is a triangle of perimeter at most $W$ 
with exactly one vertex in $S_\ell$, the edge with both endpoints in 
$S \setminus S_\ell$
has length at most $\ell/2$ and thus must lie in a single 
grid cell $\sigma \in \bigcup_{i=1}^4 G_i$.
To summarize:
\begin{lemma}
\label{lem:decision}
Let $\D(S)$ be a disk graph on $n$
sites, and let $W > 0$. We can decide
in $O(n \log n)$ worst-case time whether
$\D(S)$ contains a triangle of perimeter at
most $W$.
\end{lemma}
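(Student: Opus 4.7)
The plan is to turn the discussion preceding the lemma into a clean decision algorithm that runs in $O(n \log n)$ worst-case time, handling three disjoint cases based on where the hypothetical light triangle can live.

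First, I would set $\ell = W/(3\sqrt 2)$ and build the four shifted grids $G_1,\dots,G_4$ as in Figure~\ref{fig:grids}. Each site belongs to $O(1)$ cells, and the containing cells are computable in $O(1)$ time via the floor function, so bucketing all sites into all nonempty cells of all four grids takes $O(n)$ time. For every nonempty cell $\sigma$, I then invoke Theorem~\ref{thm:unweightedtriangle} on $\D(S\cap \sigma)$; since any triangle inside a single cell has perimeter at most $\sqrt{2}\,\ell = W/3 \le W$, if any such call succeeds we immediately return YES. The running time of this phase is $\sum_\sigma O(|S\cap\sigma|\log|S\cap\sigma|) = O(n\log n)$ by the $O(1)$-overlap of cells.

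Next, assuming all cell-local calls fail, Lemma~\ref{lem:trianglegridcell} certifies that any triangle in $\D(S)$ has an edge of length $>\ell/2$, hence a vertex with radius $>\ell/4$, i.e.\ a \emph{large} vertex. Moreover, for each nonempty $\sigma$, Lemma~\ref{lem:volumearg} guarantees that $\sigma$ contains at most $18$ large sites, so the $5\times 5$ neighborhood $N(\sigma)$ (which must contain every vertex of a perimeter-$W$ triangle involving a site in $\sigma$, because cells have diameter $W/3$) contains $O(1)$ large sites. To detect a triangle of perimeter at most $W$ with \emph{at least two} large vertices, I iterate over each nonempty cell $\sigma$, over each large site $s\in\sigma$ (constant-size list), over each large site $t\in N(\sigma)$ (again constant-size), and over every site $u\in N(\sigma)$, checking whether $stu$ is a triangle of perimeter at most $W$ in $\D(S)$. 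Each site of $S$ is examined $O(1)$ times for $O(1)$ large pairs, so this phase is $O(n)$.

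Finally, I handle triangles with \emph{exactly one} large vertex. Such a triangle has its two non-large vertices connected by an edge of length $\le \ell/2$, so both endpoints lie in a single grid cell $\sigma\in\bigcup_i G_i$. For each nonempty $\sigma$, since the first phase found no triangle, Lemma~\ref{lem:triangle_planar} tells us that $\D(S\cap\sigma)$ is plane and so has $O(|S\cap\sigma|)$ edges, which I construct in $O(|S\cap\sigma|\log|S\cap\sigma|)$ time by the same argument as in Theorem~\ref{thm:unweightedtriangle}. For every edge $st$ of $\D(S\cap\sigma)$ with $s,t\in S\setminus S_\ell$, I loop over the $O(1)$ large sites in $N(\sigma)$ and test whether $stu$ is a triangle of perimeter at most $W$. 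Summing over all cells and using the $O(1)$-overlap, this phase is $O(n\log n)$. If no case returns YES, we return NO.

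The main conceptual obstacle is ensuring that the case split exhausts all possibilities for a triangle of perimeter at most $W$, which is precisely what Lemmas~\ref{lem:trianglegridcell} and~\ref{lem:volumearg} were designed to guarantee; the main technical ingredient is already supplied by Theorem~\ref{thm:unweightedtriangle} and Lemma~\ref{lem:triangle_planar}, which let us both search inside a cell and construct the planar graph cheaply when no cell-local triangle exists. Adding the runtimes of the three phases yields the claimed $O(n\log n)$ bound.
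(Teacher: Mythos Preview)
Your proposal is correct and follows essentially the same three-phase approach as the paper: search each grid cell with Theorem~\ref{thm:unweightedtriangle}, then brute-force triangles with at least two large vertices, then enumerate small--small edges inside each (now plane) cell and pair them with the $O(1)$ large sites in the neighborhood. One small slip: the perimeter of a triangle in a single cell is at most $3\cdot\sqrt{2}\,\ell = W$, not $\sqrt{2}\,\ell = W/3$ as you wrote, but the conclusion ``$\le W$'' is unaffected.
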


We employ the following
general lemma due to Chan~\cite{Chan99}.
Let $\Pi$ be a \emph{problem space}, and 
for a problem $P \in  \Pi$, let 
$w(P) \in \R$ be its \emph{optimum} and
$|P| \in \N$ be its \emph{size}.

\begin{lemma}[Lemma~2.1 in~\cite{Chan99}]
\label{lem:chan}
Let $\alpha < 1$, $\eps > 0$, and $r \in \N$ be
constants, and let $\delta(\cdot)$
be a function such that $\delta(n)/n^\eps$ is
monotone increasing in $n$.
Given any optimization problem $P \in \Pi$ with
optimum $w(P)$, suppose that within
time $\delta(|P|)$,
(i) we can decide whether $w(P) < t$, for
any given $t \in \R$, and (ii)
we can construct $r$ subproblems
$P_1, \dots, P_r$, each of size at
most $\lceil \alpha|P|\rceil$, so that
\[
w(P) = \min\{w(P_1), \dots, w(P_r)\}.
\]
Then we can compute $w(P)$ in total expected time $O(\delta(|P|))$.
\end{lemma}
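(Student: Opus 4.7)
The plan is to reduce the optimization problem to the decision problem in a randomized way, using the running best estimate as a threshold so that only a small number of subproblems actually need to be solved recursively.

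Given an input instance $P$, I would first spend $\delta(|P|)$ time to construct the $r$ subproblems $P_1,\dots,P_r$. Then I would pick a uniformly random permutation $\pi$ of $\{1,\dots,r\}$ and process the subproblems in the order $P_{\pi(1)},\dots,P_{\pi(r)}$, maintaining a current candidate $t$ for $w(P)$ (initialized to $+\infty$). For each $P_{\pi(i)}$ in turn, I would invoke the decision procedure to test whether $w(P_{\pi(i)})<t$: if the answer is no, discard $P_{\pi(i)}$; otherwise, recursively compute $w(P_{\pi(i)})$ and reset $t$ to this smaller value. At the end $t=\min_i w(P_i)=w(P)$, so correctness is immediate.

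The central step is the expected running time analysis, which I would handle by backward analysis. Conditioning on the (fixed) multiset of values $w(P_1),\dots,w(P_r)$, a recursive call is triggered at step $i$ only if $w(P_{\pi(i)})$ is the prefix minimum of $w(P_{\pi(1)}),\dots,w(P_{\pi(i)})$, and by the symmetry of a random permutation this event has probability at most $1/i$. Summing over $i$, the expected number of recursive solves per invocation is the harmonic sum $H_r = O(\log r)$. Writing $T(n)$ for the expected time on an instance of size $n$, this yields the recurrence
\[
T(n)\;\le\;O\bigl(r\,\delta(n)\bigr)\;+\;O(\log r)\cdot T\bigl(\lceil\alpha n\rceil\bigr),
\]
where the first term pays for constructing the subproblems and running the $r$ decision calls, each of cost at most $\delta(n)$.

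The main obstacle will be showing that this recurrence solves to $T(|P|)=O(\delta(|P|))$. The hypothesis that $\delta(n)/n^{\eps}$ is monotone increasing gives $\delta(\alpha n)\le\alpha^{\eps}\delta(n)$, so unrolling the recursion, the total expected work at depth $d$ of the recursion tree is bounded by $O\bigl(r\,(c\log r)^{d}\alpha^{d\eps}\,\delta(n)\bigr)$ for some universal constant $c$. Since $\alpha<1$, $\eps>0$, and $r$ are all constants, $c\log r\cdot\alpha^{\eps}$ is a fixed constant, and the resulting geometric series in $d$ sums to $O(\delta(n))$, with the hidden constant depending on $\alpha$, $\eps$, and $r$. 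This delivers the claimed bound on the expected running time.
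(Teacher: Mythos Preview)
The paper does not actually prove this lemma; it is quoted verbatim from Chan's paper. Your overall strategy---random permutation of the subproblems, prefix-minimum backward analysis, and the resulting recurrence---is precisely Chan's. However, the last step of your analysis contains a real gap. You assert that because $c\log r\cdot\alpha^{\eps}$ is ``a fixed constant'' the geometric series converges; but nothing in the hypotheses forces this constant to be less than $1$. Concretely, for the very parameters the paper applies in Theorem~\ref{thm:shortesttriangle} ($r=4$, $\alpha=3/4$, $\eps=1$) one has $H_r=25/12$ and $\alpha^{\eps}=3/4$, so $H_r\,\alpha^{\eps}=25/16>1$. Since the recursion has depth $\Theta(\log n)$, the sum $\sum_d (H_r\alpha^{\eps})^d$ is then polynomial in $n$, not bounded, and your recurrence $T(n)\le O(r\,\delta(n))+H_r\,T(\lceil\alpha n\rceil)$ does \emph{not} solve to $O(\delta(n))$.

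The missing idea (and the point of Chan's argument) is a bootstrapping step: before running the random-permutation scan, first unfold the decomposition a constant number $k$ of levels, producing $r^k$ subproblems each of size at most $\lceil\alpha^k n\rceil$. Now the expected number of recursive solves is $H_{r^k}=O(k\log r)$, while the shrinkage in the recursion is $\alpha^{k\eps}$. Because $k\log r$ grows only linearly in $k$ whereas $\alpha^{k\eps}$ decays exponentially, one can choose a constant $k$ (depending only on $\alpha,\eps,r$) so that $H_{r^k}\cdot\alpha^{k\eps}<1$. The non-recursive work---building all $r^k$ subproblems and performing $r^k$ decision calls---is still at most a constant (depending on $k$) times $\delta(n)$, and with the branching factor now genuinely contracting, the recurrence does solve to $T(n)=O(\delta(n))$.
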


Now the following main theorem of this section is immediate.

\begin{theorem}
\label{thm:shortesttriangle}
Let $\D(S)$ be a weighted disk graph on $n$ sites.
We can compute a shortest triangle in $\D(S)$
in $O(n \log n)$ expected time, if one exists.
\end{theorem}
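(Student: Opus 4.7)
The plan is to plug the decision algorithm of Lemma~\ref{lem:decision} into Chan's randomized optimization framework (Lemma~\ref{lem:chan}). First I would set up the problem space $\Pi$ whose instances are subsets $S' \subseteq S$ of the sites, with $|P| = |S'|$ and $w(P)$ the minimum perimeter of a triangle in $\D(S')$ (or $+\infty$ if none exists). The decision time is $\delta(n) = O(n \log n)$ by Lemma~\ref{lem:decision}, and $\delta(n)/n^\eps$ is monotone increasing for any $\eps \in (0,1)$, so the regularity hypothesis of Lemma~\ref{lem:chan} is met.

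The remaining step is to produce a decomposition into $r = O(1)$ subproblems of fractional size. I would partition the sites of $P$ into four disjoint subsets $S_1, \dots, S_4$, each of size at most $\lceil |P|/4 \rceil$, and for every multiset $\{i,j,k\} \subseteq \{1,\dots,4\}$ of three indices create the subproblem on $S_i \cup S_j \cup S_k$. This gives $\binom{6}{3} = 20$ subproblems of size at most $3\lceil|P|/4\rceil$, comfortably within $\lceil \alpha |P| \rceil$ for any fixed $\alpha \in (3/4, 1)$ once $|P|$ exceeds a constant. Because every triangle of $\D(P)$ distributes its three vertices among the four parts according to some such index multiset, the identity $w(P) = \min_\ell w(P_\ell)$ holds.

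I do not expect significant obstacles: the decision oracle is already at hand, and the decomposition only uses the trivial observation that three objects always fit inside the union of three parts of a four-part partition. The one rounding subtlety, keeping $3\lceil|P|/4\rceil$ bounded by $\lceil\alpha|P|\rceil$ with a constant $\alpha < 1$ independent of $|P|$, is absorbed by choosing $\alpha$ slightly above $3/4$ and by resolving base cases of constant size directly. Lemma~\ref{lem:chan} then returns the optimal perimeter in $O(n \log n)$ expected time, and a small modification of the decision procedure to report a certifying triple of sites together with a ``yes'' answer yields the triangle itself within the same bound.
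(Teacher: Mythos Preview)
Your proposal is correct and follows essentially the same approach as the paper: plug Lemma~\ref{lem:decision} into Chan's framework (Lemma~\ref{lem:chan}) and exhibit a constant-size decomposition that preserves the optimum. The paper's decomposition is the tighter variant of yours: instead of taking all $20$ index multisets, it simply takes the four complements $S_j = \{s_i : i \not\equiv j \pmod 4\}$, which are exactly your four three-element unions and already cover every triple; this yields $r=4$ and $\alpha = 3/4$ directly, but your version with $r=20$ is equally valid for the asymptotic bound.
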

\begin{proof}
We apply Lemma~\ref{lem:chan}.
For Condition~(i), we use Lemma~\ref{lem:decision}.
For Condition~(ii), we construct
four subsets $S_0, \dots, S_3$ of $S$
as follows: enumerate the sites in
$S$ as $S = \{s_1, \dots, s_n\}$, and
put the site $s_i$ into
all sets $S_j$ with $i \not\equiv j \pmod 4$.
Then, for any three sites $a,b,c \in S$, there
is at least one subset $S_j$ with
$a,b,c \in S_j$.
Now, Lemma~\ref{lem:chan} with
$\alpha = 3/4$, $\eps = 1$, $r = 4$, and
$\delta = O(n \log n)$ implies the theorem.
\end{proof}

\section{Computing the Girth of a Disk Graph}
\label{sec:disk_girth}

We extend the results from 
Section~\ref{sec:disk_triangle} to the girth.
The unweighted case is 
easy:
if $\D(S)$ is not plane, 
the girth is $3$, 
by Lemma~\ref{lem:triangle_planar}.
If $\D(S)$ is plane, we 
use the algorithm for planar graphs~\cite{ChangLu13}.
The weighted case is harder. If 
$\D(S)$ is plane, we use the algorithm
for planar graphs~\cite{lacki_min-cuts_2011}. 
If not, 
Theorem~\ref{thm:shortesttriangle} gives 
a shortest triangle $\Delta$ in $\D(S)$. 
However, there could be
cycles with at least four edges that are 
shorter than $\Delta$. 
To address this, we use $\Delta$ to split 
$\D(S)$ into sparse pieces where a shortest cycle 
can be found efficiently.

\subsection{The Unweighted Case}
\label{sec:disk_girth_unweighted}

Chang and Lu~\cite[Theorem~1.1]{ChangLu13} showed how 
to find
the girth of an unweighted planar graph 
with $n$ vertices in $O(n)$ time.
Hence, we obtain a simple extension of 
Theorem~\ref{thm:unweightedtriangle}.
\begin{theorem}
\label{thm:disk_unweighted_girth}
Let $\D(S)$ be a disk graph for a set $S$ of $n$ sites.
We can compute the unweighted girth of 
$\D(S)$ in $O(n \log n)$ worst-case time.
\end{theorem}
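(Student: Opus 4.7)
The plan is to reuse the machinery from the proof of Theorem~\ref{thm:unweightedtriangle} to dispose of the non-plane case, and to invoke the linear-time planar girth algorithm of Chang and Lu~\cite{ChangLu13} in the plane case. In detail, I first run the edge-generation procedure from the proof of Theorem~\ref{thm:unweightedtriangle}: sweep the upper/lower arcs of the disk boundaries with Bentley--Ottmann, and then a second sweep over the vertical decomposition to discover the containment edges. Abort as soon as more than $3n-6$ edges appear. By Euler's formula, if this happens, $\D(S)$ cannot be plane, and so by Lemma~\ref{lem:triangle_planar} it contains a triangle; in that case we immediately return the girth value $3$ without needing to exhibit the triangle.

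If the procedure terminates without aborting, we have an explicit representation of $\D(S)$ with at most $3n-6$ edges, built in $O(n \log n)$ worst-case time. A final plane sweep over the generated straight-line embedding tests whether this embedding is crossing-free; if a crossing is found, Lemma~\ref{lem:triangle_planar} again guarantees a triangle, and we return $3$. Otherwise $\D(S)$ is a planar graph given together with a plane straight-line embedding on $n$ vertices with $O(n)$ edges, and we can feed it to the Chang--Lu algorithm, which computes the girth in $O(n)$ additional time.

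The total running time is dominated by the plane sweeps and the vertical-decomposition construction, all of which are $O(n \log n)$; the Chang--Lu step adds only $O(n)$. Correctness is immediate: whenever we output $3$ we have certified non-planarity of the straight-line embedding, which forces a triangle by Lemma~\ref{lem:triangle_planar}, and whenever we reach the planar subroutine we have a faithful representation of $\D(S)$ on which Chang--Lu's result applies.

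There is essentially no new obstacle here: the only subtlety is making sure that the construction of $\D(S)$ in the plane case yields an explicit planar embedding (not merely the abstract graph), so that the Chang--Lu algorithm, which is stated for planar graphs, can be applied without paying an extra $O(n \log n)$ term for planarity testing and embedding; this is delivered for free by the vertical-decomposition sweep, which produces the edges together with their geometric realizations.
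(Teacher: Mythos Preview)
Your proposal is correct and follows essentially the same approach as the paper: reuse the plane-sweep machinery of Theorem~\ref{thm:unweightedtriangle} to either certify non-planarity (hence girth~$3$ by Lemma~\ref{lem:triangle_planar}) or obtain an explicit plane representation on which Chang--Lu runs in $O(n)$ time. The only extra remark you add---that the vertical-decomposition sweep already yields a geometric embedding, so no separate planarity-testing/embedding pass is needed before invoking Chang--Lu---is a harmless elaboration (and in any case linear-time planarity testing would suffice).
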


\begin{proof}
We proceed as in 
Theorem~\ref{thm:unweightedtriangle}.
If $\D(S)$ is not plane, the girth is $3$.
If $\D(S)$ is plane, 
we apply the algorithm of Chang and Lu~\cite[Theorem~1.1]{ChangLu13}
to an explicit representation of $\D(S)$.
\end{proof}

\subsection{The Weighted Case}

We describe how to find 
the shortest cycle through 
a given vertex in a 
weighted graph with certain properties.
 This is then used 
to compute the weighted girth of a disk 
graph.

Let $G$ be a graph 
with nonnegative edge weights so that all 
shortest paths and cycles in $G$ have 
pairwise distinct lengths and so that 
for all edges $uv$, the shortest path 
from $u$ to $v$ is the edge $uv$. 
We present a deterministic 
algorithm that, given $G$ and a vertex $s$, 
computes the shortest cycle in $G$ containing 
$s$, if it exists.\footnote{Even though this seems to be 
  a simple fact, we could not locate a previous reference for 
  this.} A simple randomized algorithm
can also be found in Yuster~\cite[Section~2]{Yuster11}.
The next lemma states a structural property 
of the shortest cycle through $s$.
It resembles
Lemma~1 of Roditty and V.~Williams~\cite{RodittyVa11} that
deals with an overall shortest cycle in $G$.

\begin{lemma}
\label{lem:cycle_struct}
The shortest cycle in $G$ that contains $s$ 
consists of two paths in the shortest path tree 
of $s$, and one additional edge.
\end{lemma}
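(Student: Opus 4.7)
The plan is to take the shortest cycle $C$ through $s$, write it as $u_0, u_1, \ldots, u_k = u_0$ with $s = u_0$ and $k \ge 3$, and find an edge $u_j u_{j+1}$ of $C$ such that the two arcs of $C \setminus \{u_j u_{j+1}\}$ are already shortest paths from $s$. Since shortest paths are unique (by general position), these two arcs would then be the unique $s$-$u_j$ and $s$-$u_{j+1}$ paths in the shortest path tree $T$ of $s$, proving the claim.

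The main idea to locate such an edge: for each $j \in \{1, \ldots, k-1\}$, let $\pi_j$ be the shortest $s$-$u_j$ path in $G$ and consider the closed walk $W_j = \pi_j \cdot (u_j u_{j+1}) \cdot \pi_{j+1}^{-1}$. Its length is at most the length of $C$, because the arcs of $C$ from $s$ to $u_j$ and from $s$ to $u_{j+1}$ bound $d(s,u_j)$ and $d(s,u_{j+1})$ from above. If for some $j$ the walk $W_j$ is actually a simple cycle through $s$, then by the minimality of $C$ these upper bounds must hold with equality, which exactly says that the two arcs of $C$ on either side of $u_j u_{j+1}$ are shortest paths; we are done.

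The heart of the argument is therefore to exhibit such a $j$. The walk $W_j$ fails to be simple precisely when $\pi_j$ and $\pi_{j+1}$ meet at a vertex other than $s$, i.e., their lowest common ancestor in $T$ is not $s$, which is equivalent to $u_j$ and $u_{j+1}$ lying in the same subtree of $T$ rooted at a child of $s$. Now observe that by the hypothesis that every edge $uv$ is the shortest $u$-$v$ path, both $u_1$ and $u_{k-1}$ are children of $s$ in $T$, and they are distinct vertices since $C$ is simple and $k \ge 3$. Hence the two endpoints $u_1$ and $u_{k-1}$ of the arc $u_1, u_2, \ldots, u_{k-1}$ live in different subtrees of $T$ rooted at children of $s$. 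Walking along this arc, the subtree label must change across at least one consecutive pair $(u_j, u_{j+1})$, and that is exactly the edge we need.

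The main obstacle is the potential non-simplicity of $W_j$: one must convert the closed-walk inequality into a cycle inequality. I expect the LCA/subtree-switching argument just sketched to handle this cleanly, with the only subtlety being to rule out the boundary indices $j=0$ and $j=k-1$ (where $W_j$ traverses the edge $s u_1$ or $s u_{k-1}$ twice); but these are not the indices produced by the subtree-switching argument, so they cause no trouble. The general-position assumption is used twice: first for uniqueness of shortest paths, so that shortest $s$-$u_j$ paths live in $T$, and second implicitly in equating $W_j$ with $C$ when $|W_j| = |C|$, which pins down the exact decomposition claimed.
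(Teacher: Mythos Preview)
Your argument is correct and takes a cleaner route than the paper's. Both proofs hinge on the same comparison: for an edge $u_j u_{j+1}$ of $C$, form the closed walk $W_j = \pi_j \cdot (u_j u_{j+1}) \cdot \pi_{j+1}^{-1}$ and use $|W_j| \le |C|$ together with minimality of $C$. The difference lies in how the right edge is located and how non-simplicity of $W_j$ is handled. The paper argues by contradiction: assuming $C$ is not of the claimed form, it selects the ``balancing'' edge $v_k v_{k+1}$ (where the shorter side of $C$ switches from left to right) and then splits into two cases according to whether $\pi(v_k)$ and $\pi(v_{k+1})$ meet only at $s$; in the overlapping case it must construct a different shorter cycle through $s$. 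Your proof instead uses the edge-is-shortest-path hypothesis up front to see that $u_1$ and $u_{k-1}$ are distinct children of $s$ in $T$, so they sit in different subtrees below $s$; a subtree-switching step along the arc $u_1,\dots,u_{k-1}$ then yields a $j\in\{1,\dots,k-2\}$ for which $\pi_j\cap\pi_{j+1}=\{s\}$ automatically, making $W_j$ a genuine simple cycle with no case analysis needed. The paper's approach mirrors the Roditty--V.~Williams argument for the global shortest cycle; yours exploits the fixed-root structure (cycles through a designated $s$) more directly and yields a constructive, non-contradiction proof.
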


\begin{proof}
Let $C  = v_0, v_1, v_2, \dots, v_{\ell-1}, v_\ell$ 
be the shortest cycle in $G$ containing $s$, where 
all vertices $v_i$, $0 \leq i \leq \ell - 1$ are 
pairwise distinct, 
$\ell \geq 3$, and $v_0 = v_\ell = s$.
For $v_i \in C$, let $d_1(v_i)$ be the length of 
the path $s, v_1, \dots, v_i$, and let $d_2(v_i)$ be 
the length of the path $v_i, v_{i+1}, \dots, s$. 
Let $\pi(v_i)$ denote the shortest path from $s$ 
to $v_i$, and let $|v_i v_{i+1}|$ be the length of
the edge $v_iv_{i+1}$.

Suppose that $C$ is not of the desired form. 
Let $v_kv_{k+1}$ be the edge on $C$ with
$d_1(v_k) < |v_kv_{k+1}| + d_2(v_{k+1})$ and
$d_2(v_{k+1}) < d_1(v_{k}) + |v_k v_{k+1}|$. By 
our assumptions on $G$, the edge $v_k v_{k+1}$ 
exists and $k \neq 0, \ell - 1$.
We distinguish two cases, illustrated in \cref{fig:dijkstra}.

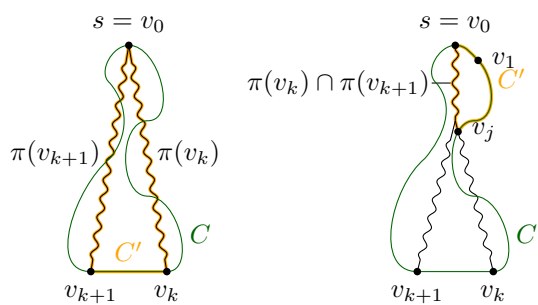
\begin{figure}
\center
%!tex root=../cycles.tex
\begin{tikzpicture}
\tikzset{dot/.style={fill, circle, inner sep = 1}}
%,decoration={snake,amplitude=.4mm,segment length=2mm, post length=1.5mm,pre length=1.5mm}}
\coordinate(s) at (0,0);
\coordinate(vk) at (0.5,-3);
\coordinate(vk+1) at (-0.5,-3);

\coordinate(m1) at (-0.25,-0.5);
\coordinate(m2) at (0.25,-1);
\coordinate(m3) at (-0.25,-1.5);
\coordinate(m4) at (0.25,-2);
\coordinate(m5) at (-0.25,-2.5);

\draw[very thick, ouryellow, decorate, decoration={snake, amplitude=0.4mm, post length=1.5mm, pre length = 2.3mm, segment length=2.5mm}](s)--(vk);
\draw[very thick, ouryellow,decorate, decoration={snake, amplitude=0.4mm, post length=1.5mm, pre length = 1.5mm, segment length=2.5mm}](vk+1)--(s);

\draw[very thick, ouryellow](vk)--node[above]{$C'$}(vk+1);

\draw[decorate, decoration={snake, amplitude=0.4mm, post length=1.5mm, pre length = 2.3mm, segment length=2.5mm}](s)-- node[right, yshift=0.2em]{$\pi(v_{k})$}(vk);

\draw[decorate, decoration={snake, amplitude=0.4mm, post length=1.5mm, pre length = 1.5mm, segment length=2.5mm}](vk+1)-- node(label)[left,yshift=0.2em]{$\pi(v_{k+1})$}(s);

\draw[ourdarkgreen](vk)--(vk+1);
\draw[ourdarkgreen](s) to[out=0, in =0](m2) to[out=180, in = 180] (m4) to[out=0, in = 0] node[right]{$C$} (vk);
\draw[ourdarkgreen](s) to [out=180, in=140](m1) to [out=-50, in=50](m3) to [out=-140, in =180] (vk+1);

\node[dot, label={$s=v_0$}] at (s) {};
\node[dot, label=below:$v_k$] at (vk) {};
\node[dot, label=below:$v_{k+1}$] at (vk+1) {};

\end{tikzpicture}
\begin{tikzpicture}
\tikzset{dot/.style={fill, circle, inner sep = 1}}
%,decoration={snake,amplitude=.4mm,segment length=2mm, post length=1.5mm,pre length=1.5mm}}
\coordinate(s) at (0,0);
\coordinate(v1)at (0.3,-0.2);
\coordinate(vk) at (0.5,-3);
\coordinate(vk+1) at (-0.5,-3);
\coordinate(split) at (0,-1);

\coordinate(m1) at (-0.25,-0.5);
\coordinate(m2) at (0.25,-1);
\coordinate(m3) at (-0.4,-1.5);
\coordinate(m4) at (0.25,-2);
\coordinate(m5) at (-0.25,-2.5);

\draw[decorate, decoration={snake, amplitude=0.4mm, post length=0mm, pre length = 0.5mm, segment length=2.5mm}, very thick, ouryellow](s)-- (split);

\draw[decorate, decoration={snake, amplitude=0.4mm, post length=1.5mm, pre length = 2.3mm, segment length=2.5mm}, name path=Dk](split)--(vk);
\draw[decorate, decoration={snake, amplitude=0.4mm, post length=1.5mm, pre length = 1.5mm, segment length=2.5mm}](vk+1)--(split);

\draw[decorate, decoration={snake, amplitude=0.4mm, post length=0mm, pre length = 0.5mm, segment length=2.5mm}](s)-- node(label)[left, xshift=-0.6em]{$\pi(v_k)\cap \pi(v_{k+1})$}(split);
\draw ($(label.east)-(0.1,0)$) -- ($(label.east)+(0.15,0)$);

% \draw[very thick, ouryellow, decorate, decoration={snake, amplitude=0.4mm, post length=1.5mm, pre length = 2.3mm, segment length=2.5mm}](s)--(vk);
% \draw[very thick, ouryellow,decorate, decoration={snake, amplitude=0.4mm, post length=1.5mm, pre length = 1.5mm, segment length=2.5mm}](vk+1)--(s);

\draw[ourdarkgreen](vk)--(vk+1);
\draw[ourdarkgreen, name path=Ck](s) to [out=0, in=135] (v1) to[out=-45, in =0](m2) to[out=180, in = 180] (m4) to[out=0, in = 0] node[right]{$C$} (vk);
\draw[ourdarkgreen](s) to [out=180, in=140](m1) to [out=-50, in=50](m3) to [out=-140, in =180] (vk+1);
\node[dot, label={$s=v_0$}] at (s) {};
\node[dot, label=below:$v_k$] at (vk) {};
\node[dot, label=below:$v_{k+1}$] at (vk+1) {};
%\node[dot] at (split){};
\node[dot, label=right:{$v_1$}] at (v1){};

\path[name intersections={of=Dk and Ck,by=vj}];
\node[dot, label=right:{$v_j$}] at (vj){};

\node[above right of = vj, ouryellow]{$C'$};
\begin{scope}[on background layer]
\draw[very thick, ouryellow](split)--(vj);
\draw[very thick, ouryellow](s) to [out=0, in=135] (v1) to [out=-45, in=0](m2)   to[out=180, in=55] (vj);
\end{scope}
\end{tikzpicture}
\caption{The two cases for $\pi(v_k)\cap \pi(v_{k+1})$. 
On the left the paths are disjoint, on the right 
the shortest path share a prefix.}
\label{fig:dijkstra}
\end{figure}

First, suppose that 
$\pi(v_k) \cap \pi(v_{k+1}) = \{s\}$.
Consider the cycle $C'$ given by $\pi(v_k)$, the 
edge $v_k v_{k+1}$, and $\pi(v_{k+1})$. Since 
$s \neq v_k, v_{k+1}$ and since the edge 
$v_{k}v_{k+1}$ does not appear on $\pi(v_k)$
nor on $\pi(v_{k+1})$, it follows that $C'$ is a 
proper cycle.  Furthermore, by assumption, $C'$ 
is strictly shorter than $C$, because 
$\pi(v_k)$ is shorter than $d_1(v_k)$ or $\pi(v_{k+1})$ 
is shorter than $d_2(v_{k+1})$.
This contradicts our assumption on $C$.

Second, suppose that
$|\pi(v_k) \cap \pi(v_{k+1})| \geq 2$.
Since $\pi(v_k)$ and $\pi(v_{k+1})$ are shortest
paths, their intersection is a prefix of each path.
By the assumption\footnote{Namely, that for all 
edges $uv$, the shortest path from $u$ to $v$ is the edge 
$uv$.}
on $G$ 
at least one of $v_1, v_{\ell-1}$ is not in
$\pi(v_k) \cup \pi(v_{k+1})$.
Without loss of generality, this vertex is $v_1$.
Let $j \geq 1$ be the smallest index such that 
$v_j \in \pi(v_k) \cup \pi(v_{k+1})$. We have 
$j \in \{2, \dots, k\}$.  Consider the cycle $C'$ 
that starts at $s$, follows $C$ along
$v_1, v_2, \dots$ up to $v_j$, and then returns 
along $\pi(v_k)$ or $\pi(v_{k+1})$ to $s$. By 
construction, $C'$ is a proper cycle.
Furthermore, $C' \neq C$, because even if $j = k$, 
the path $\pi(v_k)$ cannot contain the part of $C$ 
from $v_{k+1}$ to $s$,
due to the choice of $k$. Finally, $C'$
is strictly shorter than $C$, because the second part 
of $C'$ from $v_j$ to $s$ follows a shortest path and 
is thus strictly shorter than
$d_2(v_j)$. Again, $C'$ contradicts our choice of $C$.
\end{proof}

\begin{theorem}
\label{thm:dijkstra_cycle}
Let $G = (V, E)$ be a weighted graph with $n$ vertices
and $m$ edges that has the properties
given at the beginning of this section. Let $s \in V$.
We can compute the shortest cycle in $G$ that contains $s$ 
in $O(n \log n + m)$ time, if it exists.
\end{theorem}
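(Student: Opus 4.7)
The plan is to apply Lemma~\ref{lem:cycle_struct} directly after a single Dijkstra call. I would first run Dijkstra's algorithm from $s$, which in $O(n\log n + m)$ time produces the shortest-path distances $d(v)$ and the shortest-path tree $T$ rooted at $s$; by the assumption that all shortest-path lengths are pairwise distinct, $T$ is unique, and for every vertex $v$ reachable from $s$ the path $\pi(v)$ is the unique root-to-$v$ path in $T$.

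By Lemma~\ref{lem:cycle_struct}, the shortest cycle $C^{*}$ containing $s$ has the form $\pi(u)\cdot uv\cdot\pi(v)^{-1}$ for some edge $uv\in E$ with $\pi(u)\cap\pi(v)=\{s\}$, and its length is $d(u)+w(uv)+d(v)$. To test the disjointness condition quickly, I would label every vertex $v\neq s$ by $c(v)$, the unique child of $s$ in $T$ that is an ancestor of $v$; these labels can be computed in $O(n)$ time by a single traversal of $T$ from $s$. Since both $\pi(u)$ and $\pi(v)$ are ancestor paths in $T$ that start at $s$, they meet only at $s$ if and only if $u,v\neq s$ and $c(u)\neq c(v)$.

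The algorithm then scans every edge $uv\in E$, discards it unless $u,v\neq s$ and $c(u)\neq c(v)$, and otherwise records $d(u)+w(uv)+d(v)$ as a candidate cycle length, returning the overall minimum (together with the cycle reconstructed by concatenating $\pi(u)$, $uv$, and $\pi(v)^{-1}$). The scan runs in $O(m)$ time, giving the claimed $O(n\log n+m)$ bound. Correctness has two directions: Lemma~\ref{lem:cycle_struct} guarantees that the optimum is attained by some candidate, and conversely any retained edge $uv$ is necessarily a non-tree edge (tree edges join a vertex to its parent, hence stay in one subtree of $s$), so $uv\notin\pi(u)\cup\pi(v)$, which together with $\pi(u)\cap\pi(v)=\{s\}$ ensures that $\pi(u)\cdot uv\cdot\pi(v)^{-1}$ is a genuine simple cycle through $s$. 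If no candidate is produced, there is no cycle through $s$.

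The main obstacle I anticipate is justifying that the cheap labelling $c(\cdot)$ suffices to certify the disjointness required by Lemma~\ref{lem:cycle_struct}, rather than having to compare the full paths $\pi(u),\pi(v)$ per edge; but this is immediate once $T$ is rooted at $s$, because $c(u)\neq c(v)$ is equivalent to $u$ and $v$ lying in different subtrees of $s$, and then $\pi(u)$ and $\pi(v)$ are contained in those disjoint subtrees (plus $s$). With this observation in hand, the rest of the argument reduces to a Dijkstra call followed by one edge sweep.
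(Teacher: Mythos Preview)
Your proposal is correct and essentially identical to the paper's proof: the paper also runs Dijkstra, labels each $v\neq s$ by the second vertex $b[v]$ on $\pi(v)$ (your $c(v)$), and then scans the non-tree edges, keeping those with $b[u]\neq b[v]$ and returning the minimum $d(u)+w(uv)+d(v)$. The only cosmetic difference is that the paper first restricts to non-tree edges and then applies the label test, whereas you scan all edges and let the conditions $u,v\neq s$ and $c(u)\neq c(v)$ implicitly discard tree edges; as you note, these filters are equivalent.
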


\begin{proof}
We find the shortest path
tree $T$ for $s$ in $G$, and
we traverse $T$ to find for each vertex $v$ in 
$T \setminus \{s\}$ the second vertex $b[v]$ on the
shortest path from $s$ to $v$ (the vertex following $s$). 
Then, we iterate over all edges in $E$  that are
not in $T$. For each such $e = uv$, 
we check if $b[u] \neq b[v]$. If so, $e$ 
closes a cycle in $T$ that contains $s$. 
We determine the length of this cycle (in $O(1)$ time).
We return the shortest cycle found in this way.

The correctness follows from
Lemma~\ref{lem:cycle_struct}. As for the running time,
it takes  $O(n \log n + m)$ time to find the shortest 
path tree for $s$ with
Dijkstra's algorithm and Fibonacci 
heaps~\cite[Chapter~24.3]{CormenLeRiSt09}.
After that, it takes $O(n)$ time
to compute the nodes $b[v]$, for $v \in T \setminus \{s\}$,
and $O(m)$ time to iterate over the edges not in $T$.
The length of the cycle associated with such an edge $e$ 
can be computed in $O(1)$ time, using the shortest path
distances in $T$ and the length of $e$.
\end{proof}

Let $\D(S)$ be a weighted
disk graph on $n$ sites.  
A careful combination of the tools developed so far
gives an algorithm for the weighted girth of $D(S)$.

\begin{theorem}
\label{thm:disk_girth_weighted}
Given a weighted disk graph $\D(S)$ on 
$n$ sites, we can compute the weighted girth of 
$\D(S)$ in  $O(n\log n)$ expected time.
\end{theorem}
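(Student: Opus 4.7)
The plan is to split the girth into contributions from a shortest triangle and from longer cycles, exploiting planarity as far as possible. First I would run the sweep of \cref{thm:unweightedtriangle} to decide in $O(n\log n)$ time whether $\D(S)$ is plane; if so, I feed the resulting explicit representation to the weighted planar girth algorithm of \Lacki~\cite{lacki_min-cuts_2011}. Otherwise, \cref{thm:shortesttriangle} gives me, in $O(n\log n)$ expected time, a shortest triangle $\Delta$ of perimeter $W$. The girth is then the smaller of $W$ and the length of the shortest cycle with at least four edges, and any such short cycle $C$ satisfies $|C|<W$, so all its vertices lie within Euclidean distance $W/2$ of one another.

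Following the strategy of \cref{sec:disk_triangle_weighted}, I set $\ell = W/(3\sqrt{2})$, impose a grid of side $\ell$, and call a site \emph{large} if $r_s>\ell/4$. The minimality of $\Delta$ yields two structural facts, both by the volume argument of \cref{lem:volumearg}. (i) Every grid cell contains $O(1)$ large sites: otherwise three of them would sit in a common $(\ell/3)\times(\ell/3)$ subsquare, be pairwise adjacent (distance $\leq\sqrt{2}\ell/3<\ell/2\leq r_s+r_t$), and form a triangle of perimeter $\leq \sqrt{2}\ell = W/3$, contradicting minimality. In particular, each ball of radius $W/2$ meets $O(1)$ cells and so contains $O(1)$ large sites. (ii) The subgraph $\D(S\setminus S_\ell)$ induced by the small sites is triangle-free and hence, by \cref{lem:triangle_planar}, plane: any small--small triangle would have all sides of length at most $\ell/2$, giving perimeter $\leq 3\ell/2<W$.

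I then split the search for a cycle of length $<W$ into two cases. In the \emph{all-small} case the cycle lives in $\D(S\setminus S_\ell)$, which I build explicitly in $O(n\log n)$ time using the sweep from \cref{thm:unweightedtriangle} (applicable by (ii)) and hand to the planar weighted girth algorithm. In the \emph{large-incident} case the cycle passes through some large vertex $v$ and, since $|C|<W$, lies inside $B(v,W/2)$. For each large $v$ I construct the local disk graph $G_v:=\D(S\cap B(v,W/2))$ as follows: I collect the $n_v:=|S\cap B(v,W/2)|$ relevant sites in $O(n_v)$ time through the grid, take the small--small edges from the precomputed plane $\D(S\setminus S_\ell)$, and add the edges incident to the $O(1)$ large sites in $B(v,W/2)$ by direct distance checks. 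By (i) and (ii), $G_v$ has $n_v$ vertices and $O(n_v)$ edges, so \cref{thm:dijkstra_cycle} returns the shortest cycle through $v$ in $G_v$ in $O(n_v\log n_v)$ time; its preconditions are supplied by the general-position assumption from the introduction.

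The final girth is the minimum of $W$ and the outputs of the two cases. A charging argument based on (i)---each site $s$ lies in $B(v,W/2)$ for a large $v$ only when $v$ is one of the $O(1)$ large sites in the $O(1)$ grid cells near $s$---yields $\sum_v n_v=O(n)$, so the large-incident case runs in $\sum_v O(n_v\log n_v)=O(n\log n)$ time; every other step is $O(n\log n)$, and the only randomisation is the call to \cref{thm:shortesttriangle}. The main obstacle is to establish the two structural observations (i) and (ii): they are what convert a single shortest triangle into a global sparsity and planarity bound, letting the local Dijkstras and the planar girth algorithm combine into an $O(n\log n)$ expected-time procedure.
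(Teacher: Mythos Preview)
Your proposal is correct and follows essentially the same approach as the paper: find a shortest triangle of perimeter $W$, use it to split sites into large ($r_s>\ell/4$ with $\ell=W/(3\sqrt{2})$) and small, handle all-small cycles via the planar weighted girth algorithm (since $\D(S\setminus S_\ell)$ is triangle-free hence plane), and handle cycles through a large site by localising to a bounded region and invoking \cref{thm:dijkstra_cycle}. The only cosmetic differences are that the paper localises to a $7\times 7$ grid neighbourhood rather than a Euclidean ball $B(v,W/2)$, and it constructs the local graph $\D(S_\sigma)$ from scratch in $O(n_\sigma\log n_\sigma)$ time instead of reusing the precomputed small--small edges; both variants give the same $O(n\log n)$ bound via the same charging argument.
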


\begin{proof}
We use Theorem~\ref{thm:shortesttriangle} to
find the shortest triangle in $\D(S)$, 
if it exists, in $O(n \log n)$ expected 
time. If $\D(S)$ has no triangle, 
it is plane by Lemma~\ref{lem:triangle_planar}. 
As in the proof of 
Theorem~\ref{thm:unweightedtriangle}, we can  then
explicitly construct $\D(S)$ in $O(n\log n)$ time
with a plane sweep. We 
determine the girth of $\D(S)$ using 
the algorithm of \Lacki and Sankowski~\cite[Section~5]{lacki_min-cuts_2011},
in additional $O(n \log\log n)$ time, and are done.

Now, suppose $\D(S)$ contains a triangle, 
and let $W$ be the length of the shortest
triangle in $\D(S)$, an upper bound for 
the girth of $\D(S)$.
As in Section~\ref{sec:disk_triangle_weighted},
we set $\ell = W/(3\sqrt{2})$, and we let 
let $G$ be the grid of
side length $\ell$ 
that has the origin
$(0,0)$ as a vertex.
We call a
site $s \in S$ 
\emph{large} if $r_s \geq \ell/4$,
and we let $S_\ell \subseteq S$ be the set of large sites.

We need to check whether $\D(S)$ contains
a cycle with more than three vertices
and length less than $W$. If so, we must
find the shortest such cycle. First, we consider 
cycles in the induced subgraph $\D(S \setminus S_\ell)$.
The graph $\D(S \setminus S_\ell)$ has no
triangle,
as such a triangle would have length less than 
$3 \cdot \ell/2 < W$.
Thus, 
by Lemma~\ref{lem:triangle_planar},
$\D(S \setminus S_\ell)$ is plane.
We can directly compute the 
weighted girth of $\D(S \setminus S_\ell)$ in $O(n \log n)$ time
with a plane sweep and 
the algorithm of \Lacki and Sankowski~\cite[Section~5]{lacki_min-cuts_2011}.
Let $\Delta_1$ be the weighted girth of $\D(S \setminus S_\ell)$.

Next, we consider cycles that have at least one large site. 
Let $\sigma$ be a cell of $G$.
The induced subgraph 
$\D(S \cap \sigma)$ has
no triangle, since by the choice of $\ell$,
such a triangle would have length less than $W$.
Thus, Lemma~\ref{lem:volumearg} 
shows $|S_\ell \cap \sigma| = O(1)$.
By the triangle inequality, the maximum distance
between any two sites in a cycle of length less 
than $W$ is less than $W/2$.  Thus, any such cycle 
containing a site $s \in S \cap \sigma$ completely 
lies 
in the $7 \times 7$ neighborhood 
$N(\sigma)$ around $\sigma$.
Since $N(\sigma)$ has $O(1)$ 
cells and since each cell contains $O(1)$ large sites, 
there are $O(1)$ large sites in 
$S_\sigma = S \cap (\bigcup_{\tau \in N(\sigma)} \tau)$.

Now, for each grid cell $\sigma$, we consider all 
large sites $s \in S_\ell \cap \sigma$.
We must find the shortest cycle 
through $s$ in the subgraph $\D(S_\sigma)$ of $\D(S)$ 
in $N(\sigma)$.
Let $n_\sigma = |S_\sigma|$. 
Since the graph induced 
by $S_\sigma \setminus S_\ell$ is plane and since 
$|S_\sigma \cap S_\ell| = O(1)$, the graph $\D(S_\sigma)$ has 
$O(n_\sigma)$ edges.  Hence, we can construct 
$\D(S_\sigma)$ and apply Theorem~\ref{thm:dijkstra_cycle} 
to compute the shortest cycle in $\D(S_\sigma)$
through $s$ in total time $O(n_\sigma \log n_\sigma)$.
Let $\Delta_2$ be the smallest length of such a 
cycle, over all grid cells $\sigma$ and all large sites 
$s \in S_\ell \cap \sigma$.
Since each small site is involved in $O(1)$
neighborhoods, we get
$\sum_{\sigma \in G} n_\sigma = O(n)$, 
and the overall running time of this step is 
$O(n\log n)$.  Finally, we return 
$\min \{W, \Delta_1, \Delta_2\}$. If we also
want the shortest cycle itself, we simply
maintain appropriate pointers in the algorithm.
The total expected running time  is 
$O(n \log n)$.
\end{proof}

\section{Finding a Triangle in a Transmission Graph}\label{sec:trianglesdir}

Given a transmission graph $T(S)$ 
on $n$ sites, we 
want to decide if $T(S)$ contains 
a directed triangle.
We first describe an inefficient algorithm
for this problem, and then we will explain
how to implement it in $O(n \log n)$ expected time.

The algorithm iterates over 
each directed edge $e = st$ with 
$r_t  \geq r_s$, and it performs two tests:
first, for each directed edge $tu$ 
with $r_u \geq r_t/2$, it checks if
$us$ is an edge in $T(S)$, i.e., if
$s \in D_u$. If so, the algorithm reports the triangle
$stu$. Second, the algorithm tests if there is
a site $u$ such that $r_u \in [r_s, r_t/2)$
and such that $us$ is an edge in $T(S)$, i.e.,
such that $s \in D_u$.
If such a $u$ exists, it reports
the triangle $stu$. 
If both tests fail for each edge $e$,
the algorithm
reports that $T(S)$ contains no triangle.
The next lemma shows that the algorithm is correct.

\begin{lemma}
\label{lem:strategy_correct}
A triple $stu$ reported by the algorithm
is a triangle in $T(S)$.
Furthermore, if $T(S)$ contains a triangle,
the algorithm will find one.
\end{lemma}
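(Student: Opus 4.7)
The plan is to prove the two assertions of the lemma separately, both by short case analyses, with one subtle point that will need explicit justification. The first step is to check that every reported triple is a triangle. For Test~1 this is immediate, since the algorithm iterates over the actual directed edges $st$ and $tu$ and explicitly verifies that $us$ is an edge. For Test~2 the algorithm never directly checks that $tu$ is an edge; I would deduce this from the radius constraints instead. Specifically, the interval $[r_s, r_t/2)$ from which $r_u$ is drawn is non-empty only when $r_s < r_t/2$, and in that case the edges $st$ and $us$ give $|ts|\le r_s$ and $|su|\le r_u$, so the triangle inequality yields
\[
|tu|\le |ts|+|su|\le r_s+r_u<\tfrac{r_t}{2}+\tfrac{r_t}{2}=r_t,
\]
which places $u$ in $D_t$ and makes $tu$ an edge.

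For the completeness direction, I would argue as follows. Suppose $T(S)$ contains a triangle. By the general-position assumption the three radii are distinct, so I relabel the vertices as $v_1,v_2,v_3$ with $r_{v_1}<r_{v_2}<r_{v_3}$. In a directed triangle, $v_1$ has a unique outgoing edge within the triangle; let $w\in\{v_2,v_3\}$ be its target and let $x$ be the remaining vertex. Since $r_w>r_{v_1}$, the outer loop does process the edge $(s,t)=(v_1,w)$. The goal is then to show that $u=x$ is detected by at least one of the two tests. Because $x$ plays the role of the third vertex in the existing triangle, the edges $(w,x)=tu$ and $(x,v_1)=us$ are automatically present. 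From $r_x>r_{v_1}=r_s$ it follows that $r_x$ lies in the union $[r_t/2,\infty)\cup[r_s,r_t/2)$, so $x$ is caught by Test~1 or Test~2 accordingly, and in either case the algorithm reports a genuine triangle (by the soundness step).

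The hardest part is the implicit-edge argument for Test~2: at first glance the algorithm appears to be missing a check that $t\to u$ is an edge, and showing that the bound $r_u<r_t/2$ together with the non-emptiness of $[r_s,r_t/2)$ is exactly what is needed to close the triangle inequality is the crucial observation. After that, the completeness argument is a clean case analysis, driven by the observation that the outgoing edge from the smallest-radius vertex of any triangle is always ``upward'' in the sense that $r_t\ge r_s$, and is therefore always examined by the outer loop.
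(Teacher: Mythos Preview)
Your proof is correct and follows essentially the same approach as the paper: the soundness argument for Test~2 via the triangle inequality (using $r_s<r_t/2$ and $r_u<r_t/2$ to bound $|tu|$) is exactly the paper's argument, and your completeness case split on whether $r_x\ge r_t/2$ matches the paper's split on $r_u$, including the observation that Test~2 may return a different witness $u'$ which is still a valid triangle by soundness. The only cosmetic difference is that you invoke general position to get strict radius inequalities and introduce the auxiliary names $v_1,v_2,v_3,w,x$, whereas the paper simply labels the triangle so that $r_s\le\min\{r_t,r_u\}$.
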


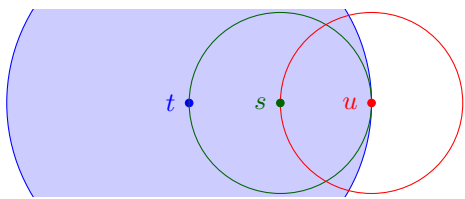
\begin{figure}
\label{fig:requrementecheck}
\center
%!tex root=../cycles.tex
\begin{tikzpicture}[
site/.style = {draw, circle, inner sep = 1pt, fill},
scale=0.4]
\clip(9.1,-3.1) -- (-6.1,-3.1) -- (-6.1,3.1) -- (9.1,3.1) -- cycle;
\coordinate (t) at (0,0);
\node[ourblue,site, label={[ourblue]left:$t$}] at (t){};
\draw[ourblue, fill opacity=0.2, fill=ourblue] (t) circle (6);
\coordinate (s) at (3,0);
\draw[ourdarkgreen] (s) circle (3);
\coordinate (u) at (6,0);
\node[ourred, site, label={[ourred]left:$u$}] at (u){};
\draw[ourred] (u) circle (3);
\node[ourdarkgreen, site, label={[ourdarkgreen]left:$s$}] at (s){};
\end{tikzpicture}
\caption{We do not need to check $u \in D_t$.}
\end{figure}

\begin{proof}
Let $stu$ be a triple reported by the algorithm.
The algorithm explicitly checks that 
$st$ and $us$ are edges in $T(S)$.
It remains to consider $tu$. 
If $r_u \geq r_t/2$, then $stu$ is reported by
the first test, and the algorithm explicitly checks
that $tu$ is an edge in $T(S)$.
If $r_u < r_t/2$, then $stu$ is reported by
the second test.
We have $r_s < r_t/2$, since $s$ and $t$ are chosen so that
$r_s \leq r_u$. Furthermore, $st$ and $us$ are edges of $T(S)$,
so $t \in D_s$ and $s \in D_u$. Since the second test ensures
that $r_u \leq r_t/2$, it follows from the triangle inequality that
\[
|tu| \leq |ts| + |su| \leq r_s + r_u < r_t/2 + r_t/2 = r_t,
\]
so $u \in D_t$, and $tu$ is an edge in $T(S)$. 
Thus, the reported triple $stu$ is a triangle in $T(S)$.

Now suppose that $T(S)$ contains a 
triangle $stu$, labeled such that 
$r_s \leq \min\{r_t, r_u\}$. If $r_u \geq r_t/2$,
then $stu$ is found by the first test for the 
edge $st$. 
If $r_u < r_t/2$,
we have
$s \in D_u$ and $r_u \in [r_s, r_t/2)$. 
Thus, the second test will be successful
for the edge $st$,
and the algorithm will report a triple $stu'$,
such that 
$s \in D_{u'}$ and $r_{u'} \in [r_s, r_t/2)$
(the site $u'$ might be different from $u$).
The first
part of the proof shows that
$stu'$  is a triangle in $T(S)$.
\end{proof}

There are several challenges for making the 
algorithm efficient.
First of all,
there might be many edges
$st$ with $r_t \geq r_s$.
However, the following
lemma shows that if there are $\omega(n)$ such edges,
the transmission graph $T(S)$ must contain a triangle.

\begin{figure}
\label{fig:subdiv}
\center
%!TEX root = ../cycles.tex
\begin{tikzpicture}[>=stealth, scale=0.8]

\foreach \i in {-2,...,3}
{
	\draw[lightgray] (\i,3.5) -- (\i,-0.5);
}
\foreach \i in {0,...,3}
{
	\draw[lightgray] (3.5,\i) -- (-2.5, \i);
}
%\draw (0,0) circle (3);
%\node[draw, fill, circle, inner sep = 1] at (0,0){};

%\draw[decoration={brace,raise=1pt,mirror},decorate] (-6,-6) -- node[below=6pt] {$r=12\cdot \frac{r}{12}$}  (6,-6);
%\draw[decoration={brace,raise=1pt,mirror},decorate] (-6,6) -- node[left=6pt] {$r=12\cdot \frac{r}{12}$}  (-6,-6);
\node[draw,fill, circle, inner sep = 1, ourblue](s1) at (0.8,1.5){};
\node[draw,fill, circle, inner sep = 1, ourdarkgreen](s2) at (.1,1.3){};
\node[draw,fill, circle, inner sep = 1, ourred](s3) at (.3,1.9){};
\coordinate(ongreen) at ($(s1)-(0,1.5)$);
\draw[ourblue, <->] (s1) -- node[left, scale=0.8]{$\frac{r}{4}$} (ongreen);
\draw[ourblue] (s1) circle (1.5);
\draw[ourdarkgreen] (s2) circle (1.5);
\draw[ourred] (s3) circle (1.5);
\draw[<->](s1)--(s2);
\draw[<->](s2)--(s3);
\draw[<->](s3)--(s1);
\draw[<->](2,0.4)  --node[above, scale=0.8] {$\frac{r}{6}$} (3,0.4);
%\draw[<->](-1.42,0)-- node[left]{\tiny{$\frac{r}{3}$}}(-0.71,-0.71);
\end{tikzpicture}
\caption{Three disks with radius at least $r/4$ in the same 
grid cell form a clique.}
\end{figure}
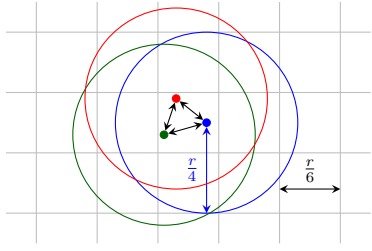

\begin{lemma}\label{lem:disks_close_together}
There is an absolute constant $\alpha$ so that 
for any $r > 0$,
if there is an $r \times r$ square $\sigma$ 
that contains more than $\alpha$ sites 
$s \in S$ with $r_s \geq r/4$, then $T(S)$ 
has a directed triangle.
\end{lemma}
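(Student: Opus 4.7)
The plan is a clean pigeonhole argument based on a fine subdivision of $\sigma$. I would first partition the $r \times r$ square $\sigma$ into a $6 \times 6$ grid of axis-parallel sub-cells, each of side length $r/6$, exactly as sketched in \cref{fig:subdiv}. The key geometric inequality is that each sub-cell has diameter $r\sqrt{2}/6 < r/4$, which follows from $\sqrt{2} < 3/2$.

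Next, I would establish the ``clique'' property depicted in the figure: whenever three distinct sites $a, b, c \in S$, each of radius at least $r/4$, lie in a common sub-cell, they induce a directed triangle in $T(S)$. Indeed, for any ordered pair, say $(a,b)$, the sub-cell diameter bound yields $|ab| < r/4 \leq r_a$, so $b \in D_a$ and the directed edge $a \to b$ is present in $T(S)$. Applying this to all six ordered pairs among $\{a,b,c\}$ shows that the induced subgraph is a complete bidirected graph, and in particular contains the directed triangle $a \to b \to c \to a$.

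Finally, the pigeonhole principle closes the argument. If $\sigma$ contains strictly more than $\alpha := 2 \cdot 36 = 72$ sites of radius at least $r/4$, then some sub-cell must host at least three of them, and by the previous step these form a directed triangle in $T(S)$. The only calibration required is choosing the subdivision fine enough that the sub-cell diameter strictly undercuts $r/4$; six subdivisions per side is the coarsest integer choice that achieves this, giving the explicit constant $\alpha = 72$. There is no serious technical obstacle; the argument is entirely elementary once the correct grid resolution is identified.
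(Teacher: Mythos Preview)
Your proposal is correct and matches the paper's proof essentially verbatim: the paper also subdivides $\sigma$ into a $6\times 6$ grid of cells of side length $r/6$, observes that each such cell is covered by the disk of any site in it with radius at least $r/4$, and sets $\alpha = 72$ so that pigeonhole forces three such sites into a common cell.
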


\begin{proof}
We cover $\sigma$ with a $6 \times 6$ grid of 
side length $r/6$; see \Cref{fig:subdiv}. 
There are $36$ grid cells.
For every $s \in S \cap \sigma$ with
$r_s \geq r/4$, the disk $D_s$ completely 
covers the grid cell containing $s$.
If $\sigma$ contains more than $\alpha = 72$ sites $s$ with 
$r_s \geq r/4$, then
one grid cell contains at least three 
such sites.
These sites form a directed triangle in $T(S)$. 
\end{proof}

Thus, to implement the algorithm, we must solve
two range searching problems.
\begin{description}
\item[(\typeROne)] EITHER determine that for every site $s \in S$,
  there are at most $\alpha$ outgoing edges 
  $st$ with $r_t \geq r_s/2$ 
  and report all
  these edges; OR find a square $\sigma$ of side length
  $r > 0$ that contains more than $\alpha$ sites
  $s \in S$ with $r_s \geq r/4$.
\item[(\typeRTwo)] Given $O(n)$ query triples 
  $(s, r_1, r_2)$ with $s \in S$ and $0 < r_1 < r_2$, 
  find a site $u \in S$ such that
  there is a query triple $(s, r_1, r_2)$ 
  with $u \neq s$,
  $r_u \in [r_1, r_2)$, and $s \in D_u$; or report that
  no such site exists. 
\end{description}
The query (\typeROne) indeed always has a valid outcome: suppose 
there is a site $s \in S$ with more than $\alpha$ outgoing edges
$st$ with $r_t \geq r_s/2$.
Then, all the endpoints $t$ lie in $D_s$, so 
the square $\sigma$ centered at $s$ with side
length $r = 2r_s$ contains more than $\alpha$ sites with 
associated radius at least $r/4$.
The next theorem shows that 
we can detect a triangle in $T(S)$ with 
linear overhead in addition to the
time needed for answering (\typeROne) and (\typeRTwo).

\begin{theorem}
\label{thm:userange}
If (\typeROne) and 
(\typeRTwo) can be solved in 
time $R(n)$ for input size $n$,
we can find a directed triangle in a transmission
graph $T(S)$ on $n$ sites in time
$R(n) + O(n)$, if it exists.
\end{theorem}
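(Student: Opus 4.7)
The plan is to execute the inefficient algorithm from the text but batch all of its range queries into a single call to (\typeROne) and a single call to (\typeRTwo); everything else will be linear bookkeeping.

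First, I would invoke (\typeROne). If it returns an $r \times r$ square $\sigma$ containing more than $\alpha$ sites of radius at least $r/4$, then by Lemma~\ref{lem:disks_close_together} $T(S)$ already has a triangle, and the proof of that lemma lets us exhibit one in $O(1)$ time: overlay a $6 \times 6$ sub-grid of side $r/6$ on $\sigma$, locate by brute force a cell that contains three of the large sites, and output them. Otherwise (\typeROne) hands us, for every $s \in S$, a list $L_s$ of at most $\alpha$ outgoing edges $st$ with $r_t \geq r_s/2$, and $\sum_s |L_s| = O(n)$. The edges $st$ with $r_t \geq r_s$ that drive the outer loop of the original algorithm form a filtered subset of $\bigcup_s L_s$, so the outer loop has only $O(n)$ iterations.

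For the first test, I would scan $L_t$ for each outer edge $st$: every candidate $u \in L_t$ already satisfies $r_u \geq r_t/2$, and the predicate $s \in D_u$ is checked in $O(1)$; since $|L_t| \leq \alpha$, this contributes $O(n)$ in total. For the second test, I would assemble one query triple $(s, r_s, r_t/2)$ per outer edge $st$, producing $O(n)$ triples, and feed them to (\typeRTwo) in a single batch. If (\typeRTwo) returns a site $u$ together with a witness triple $(s, r_s, r_t/2)$, then $s \in D_u$ and $r_u \in [r_s, r_t/2)$, so by Lemma~\ref{lem:strategy_correct} the triple $stu$ is a directed triangle of $T(S)$; otherwise the second test fails for every outer edge.

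If neither test produces a triangle, Lemma~\ref{lem:strategy_correct} guarantees that $T(S)$ contains no directed triangle, and we answer NO. The total running time is $R(n) + O(n)$ as claimed. The only thing that really needs checking is that all work outside the two range-searching calls stays linear, and this is exactly what Lemma~\ref{lem:disks_close_together} buys us via the uniform bound $|L_s| \leq \alpha$ that (\typeROne) guarantees in its non-dense branch; there is no further structural obstacle.
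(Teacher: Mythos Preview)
Your argument is correct and matches the paper's proof almost verbatim. Two small slips, neither affecting the $R(n)+O(n)$ bound: finding the triangle once (\typeROne) returns a dense square $\sigma$ takes $O(n)$ time (you must first scan $S$ to locate $\alpha+1$ large sites in $\sigma$), not $O(1)$; and the triples fed to (\typeRTwo) should be filtered by $r_t > 2r_s$ so that $r_1 < r_2$, as the specification of (\typeRTwo) requires.
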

\begin{proof}
First, we perform a range query (\typeROne). If 
it reports a square $\sigma$ of side length $r$ 
that contains more than $\alpha$ sites $s \in S$
with $r_s \geq r/4$, we scan 
$S$ to find a set $S'$ of $\alpha + 1$ such sites. 
By Lemma~\ref{lem:disks_close_together},
$T(S')$ contains a triangle, and
we find it in $O(1)$ time by testing all 
triples in $S'$.

Otherwise, (\typeROne) reports the set 
$E'$ of all edges $st$ in $T(S)$ with $r_t \geq r_s/2$, 
where $|E'| = O(n)$. 
We go through all edges $e = st$ in $E'$
with $r_t \geq r_s$, and we check if there
is an edge $tu$ in $E'$
such that $us$ is an edge in $T(S)$, i.e., such that
$s \in D_u$. If so, we
report the triangle $stu$. This takes care of
the first test in the algorithm, and we check
only $O(n)$ triples, because for each site in $S$,
there are at most $\alpha$ outgoing edges 
in $E'$.
If we have not been successful,
we again go trough all edges 
$e = st$ in $E'$, and if $r_t > 2r_s$,
we create the triple $(s, r_s, r_t/2)$.
We perform a range query
(\typeRTwo) on the resulting set of  $O(n)$ triples.
If (\typeRTwo) finds a site $u \in S$ such that for a query triple
$(s, r_s, r_t/2)$, we have  $u \neq s$,
$r_u \in [r_s, r_t/2)$, and $s \in D_u$, we report the
triangle $stu$. Otherwise, we report that $T(S)$ does not
contain a triangle.

By Lemma~\ref{lem:strategy_correct},
we correctly report a triangle in $T(S)$, if 
it exists.
The time for the additional steps 
is $O(n)$, so the total running time 
is $R(n) + O(n)$.
\end{proof}

Using existing methods~\cite{WillardLu85}, it is easy to solve 
(\typeROne) and (\typeRTwo) in $O(n \log^2 n)$ time. However, 
a better solution is possible.
In the next section, we will implement  (\typeROne) and
(\typeRTwo) in $O(n \log n)$ expected time.

\begin{theorem}\label{thm:unweightedtriangledir}
\label{thm:transmission_triangle_unweighted}
Let $T(S)$ be a transmission graph on $n$ sites.
We can find a directed triangle in $T(S)$ in 
expected time
$O(n \log n)$, if it exists.
\end{theorem}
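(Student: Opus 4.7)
My plan is to apply Theorem~\ref{thm:userange}, which already reduces triangle detection to solving the range problems (\typeROne) and (\typeRTwo) with only $O(n)$ additional overhead. All that remains is to implement (\typeROne) and (\typeRTwo) in $O(n\log n)$ expected time; the detailed constructions are deferred to the following section, and I describe here only the high-level approach and the point where the $\log n$-factor savings come from.

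For (\typeROne) my plan is to bucket the sites by radius into geometric classes (powers of $2$) and overlay each class with a grid of matching side length. By Lemma~\ref{lem:disks_close_together}, each such grid cell can hold only $O(1)$ members of a class with radius at least $r/4$ before a triangle is forced, in which case the cell's bounding square can be returned as the ``OR'' outcome. In the remaining case each site $s$ needs to examine only $O(1)$ canonical cells at the scale determined by $r_s$ to enumerate the edges $st$ with $r_t \geq r_s/2$. I would organize all of these cells with a linearized quadtree addressed by Morton order, so that the per-site lookups become constant time after an initial $O(n\log n)$ sort.

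For (\typeRTwo) my plan is to use the standard lifting $\pp \mapsto (\pp, |\pp|^2)$, under which containment $s \in D_u$ becomes comparison of the lifted point of $s$ with the half-space tangent to the paraboloid $z = x^2+y^2$ associated with $D_u$. I would sort sites by $r_u$ and build a balanced binary tree whose internal nodes store the upper envelope (a three-dimensional convex polytope) of the lifted disks whose centers have radius in the node's range. Each query triple $(s, r_1, r_2)$ then decomposes into $O(\log n)$ canonical nodes and is answered by point-in-polytope tests; building the polytopes by randomized incremental construction and processing all $O(n)$ queries in batch keeps the expected total cost at $O(n\log n)$.

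The hard part will be shaving the $\log$ factor. Generic range-tree tools such as those of Willard and Lueker~\cite{WillardLu85} already give $O(n\log^2 n)$ for both (\typeROne) and (\typeRTwo); bringing the cost down to $O(n\log n)$ forces the linearized-quadtree view in (\typeROne) and the batched three-dimensional polytope processing in (\typeRTwo), so that no per-query auxiliary search inside a canonical node is required and a single geometric sort is amortized across all operations. Once these two building blocks are established, Theorem~\ref{thm:userange} combined with the $O(n)$ overhead immediately yields the claimed $O(n\log n)$ expected running time.
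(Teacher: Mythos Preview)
Your overall strategy---invoke Theorem~\ref{thm:userange} and implement (\typeROne) and (\typeRTwo) in $O(n\log n)$---is exactly the paper's, and for (\typeRTwo) you correctly identify the lifting to the paraboloid and the balanced tree over radii. The gap is that the concrete mechanisms you sketch do not actually shave the extra logarithm; you have located where the difficulty lies but not how to resolve it.

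For (\typeROne), powers-of-$2$ radius classes are not enough by themselves: the constraint $r_t\ge r_s/2$ is a half-infinite radius range, so a single query would touch every class above the one for $r_s/2$, and there can be $\Theta(n)$ non-empty classes; ``$O(1)$ canonical cells at the scale determined by $r_s$'' does not filter by radius. The paper instead decomposes each radius range into $O(\log n)$ canonical intervals of a balanced tree $B$ and stores a separate linearized compressed quadtree $\cL_v$ at every node of $B$; the extra log is then shaved by sorting the $O(n)$ query cells once in $Z$-order and pushing them top-down through $B$, so that at every node the already-sorted query list and $\cL_v$ are simply merged---no per-query predecessor search is ever performed. For (\typeRTwo), building each node's polytope from scratch by randomized incremental construction costs $\sum_v |\cD_v|\log|\cD_v|=O(n\log^2 n)$, and individual point-in-polytope tests at $O(\log n)$ canonical nodes contribute another logarithm per query. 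The paper instead builds the halfspace intersections $\E_v$ bottom-up using Chazelle's linear-time $3$D polytope intersection, and it never tests query points one by one: for each node $v$ it forms the convex hull $\widehat Q_v$ of all lifted query points assigned there (propagated top-down via the linear-expected-time subset-hull algorithm of Chazelle and Mulzer), intersects $\widehat Q_v$ with $\E_v$ in linear time, and reads off which query vertices were clipped. These two batched-merge steps are precisely the missing ingredients behind the $O(n\log n)$ bound.
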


\section{Batched Range Searching}

The range queries must handle subsets of
sites whose associated radii lie in certain intervals:
a query $s$ in (\typeROne)
concerns sites $t \in S$ such that $r_t \geq r_s/2$; and a
query $(s, r_1, r_2)$ in (\typeRTwo) concerns sites 
$t$ such that $r_t \in [r_1, r_2)$.
Using a standard approach~\cite{dBCvKO,WillardLu85}, we 
subdivide each such \emph{query interval} into $O(\log n)$ pieces 
from a set of \emph{canonical intervals}.
For this, we build a balanced binary tree 
$B$ whose
leaves are the sites of $S$, sorted
by increasing 
associated radius.
For each vertex $v \in B$, let 
the
\emph{canonical interval}
$\cD_v$ be the
sorted list of sites in the subtree rooted at 
$v$. There are
$O(n)$ canonical intervals. 

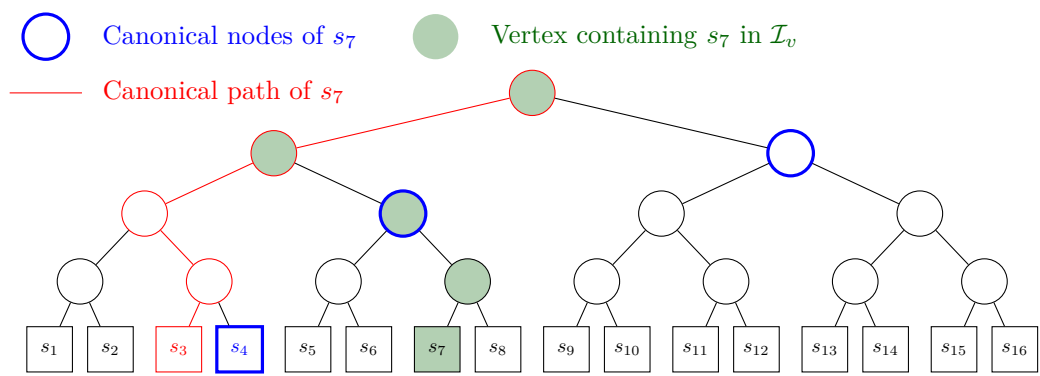
\begin{figure}
\centering
%!tex root = ../cycles.tex
\begin{tikzpicture}[
	inner/.style={draw, circle, minimum size=8mm, black, scale=0.75}, 
	leaf/.style={draw,minimum size=8mm , black, scale=0.75},
	query/.style={fill=ourdarkgreen, text opacity= 1, fill opacity = 0.3},
	rightinner/.style={inner, very thick, ourblue}, 
	every child/.style={black},
	rightleaf/.style={leaf, very thick, ourblue},
	level 1/.style= {sibling distance = 68mm, level distance = 8mm},
	level 2/.style= {sibling distance = 34mm},
	level 3/.style= {sibling distance = 17mm, level distance = 9 mm},
	level 4/.style= {sibling distance = 8mm, level distance = 9mm},
  ]

\begin{scope}[shift={(-7,0)}]
\node (origin) at (0,-1) {};
\node (rightoforigin) at (1.2,-1){};
\node (red) at ($(origin)!0.5!(rightoforigin)$) {};
\draw[red] (origin) -- (rightoforigin);
\node[red, right = 0.5cm of red, anchor = west] (redlabel) {Canonical path of $s_7$};
\node[circle, minimum size=6mm, draw, very thick, ourblue, above = 0.3cm of red] (blue)  {};
\node[ourblue, right =0.3cm of blue, anchor=west] {Canonical nodes of $s_7$};
\node[circle, minimum size=6mm, query, right =4.5cm of blue] (query) {};
\node[ourdarkgreen, right =0.3cm of query, anchor=west]  {Vertex containing $s_7$ in $\cD_v$};
\end{scope}
\begin{scope}[shift={(0,-1)}]
\node[inner, ourred, query]{} 
 child[ourred] {node[inner, ourred, query] {}
  child [ourred]{node[inner, ourred] {}
   child  {node[inner] {}
    child {node[leaf] {$s_1$}}
child {node[leaf] {$s_2$}}
}
   child[ourred] {node[inner, ourred] {}
    child[ourred] {node[leaf, ourred] {$s_3$}}
child {node[rightleaf] {$s_4$}}
}
}
  child {node[rightinner, query] {}
   child {node[inner] {}
    child {node[leaf] {$s_5$}}
child {node[leaf] {$s_6$}}
}
   child {node[inner, query] {}
    child {node[leaf, query] {$s_7$}}
child {node[leaf] {$s_8$}}
}
}
}
 child {node[rightinner] {}
  child {node[inner] {}
   child {node[inner] {}
    child {node[leaf] {$s_9$}}
child {node[leaf] {$s_{10}$}}
}
   child {node[inner] {}
    child {node[leaf] {$s_{11}$}}
child {node[leaf] {$s_{12}$}}
}
}
  child {node[inner] {}
   child {node[inner] {}
    child {node[leaf] {$s_{13}$}}
child {node[leaf] {$s_{14}$}}
}
   child {node[inner] {}
    child {node[leaf] {$s_{15}$}}
child {node[leaf] {$s_{16}$}}
}
}
}
;
\end{scope}
\end{tikzpicture}
\caption{Example is 
for a query of type (\typeROne), assuming that 
\(r_{s_3}< r_{s_7}/2 \leq r_{s_4}\).}
\label{fig:querypath}
\end{figure}

Next, we define \emph{canonical paths}
and \emph{canonical nodes}.
For a radius $r > 0$, the (proper) \emph{predecessor} of $r$ 
is the site $s \in S$ with the largest 
radius $r_s \leq r$ ($r_s < r$).
The (proper) \emph{successor} 
of $r$ is defined analogously.
For a query $s$ in (\typeROne), we consider the path
$\pi$ in $B$ from the root to the leaf with the proper
predecessor $t$ of $r_s/2$.  If $t$ does not
exist (i.e., if $r_t  \geq r_s/2$, for all $t \in S$), 
we let $\pi$ be the
left spine of $B$. We call $\pi$ the \emph{canonical path} 
for $s$. The \emph{canonical nodes} 
for $s$ are the right children of the nodes in $\pi$ that 
are not in $\pi$ themselves, plus possibly the last node 
of $\pi$, if $r_t  \geq r_s/2$, for all $t \in S$, see
Figure~\ref{fig:querypath}.

For a query $(s, r_1, r_2)$ in (\typeRTwo), we consider 
the path $\pi_1$ in $B$ from the root to the leaf
with the proper predecessor $t_1$ of $r_1$ and the path $\pi_2$ in 
$B$ from the root to the leaf for the
 successor $t_2$ of $r_2$. Again, if $t_1$ 
does not exist, we take $\pi_1$ as the left spine of $B$,
and if $t_2$ does not exist,
we take $\pi_2$ as the right spine of $B$.
Then, $\pi_1$ and $\pi_2$ are the \emph{canonical paths}
for $(s, r_1, r_2)$. The \emph{canonical nodes} for $(s, r_1, r_2)$
are defined as follows: for each vertex $v$ in $\pi_1 \setminus \pi_2$, 
we take the right child of $v$ if it is not in $\pi_1$, and
for each $v$ in $\pi_2 \setminus \pi_1$, we take the left child
of $v$ if it is not  in $\pi_1$. Furthermore, we take
the last node of $\pi_1$ if $t_1$ does not exist, and
the last node of $\pi_2$ if $t_2$ does not exist.
A standard argument bounds the number and total size of the 
canonical intervals.

\begin{lemma}
\label{lem:canonical}
The total size of the canonical intervals
is $O(n \log n)$. The tree $B$ and the 
canonical intervals can be built in $O(n \log n)$ time.
For any query $q$ in (\typeROne) or (\typeRTwo), there are
$O(\log n)$ canonical nodes, and they can be found in 
$O(\log n)$ time. 
The canonical intervals for the canonical nodes
of $q$ constitute a partition of the query interval
for $q$.
\end{lemma}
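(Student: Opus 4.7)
The plan is to verify each of the four claims separately, all by standard segment-tree/range-tree arguments.

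First, I would build $B$ by sorting the sites of $S$ by associated radius in $O(n \log n)$ time and making the sorted sequence the leaves of a balanced binary tree with $O(n)$ internal nodes and height $O(\log n)$. The canonical interval $\cD_v$ at every node $v$ can be produced bottom-up: at a leaf $\cD_v$ is a single site, and at an internal node $\cD_v$ is the merge of the two children's lists (which is already sorted because the children's lists are contiguous blocks in the leaf order). Since each site lies in the canonical interval of exactly one node per level, it appears in $O(\log n)$ canonical intervals, so $\sum_v |\cD_v| = O(n \log n)$; the merging work is proportional to the output sizes, so the whole construction runs in $O(n \log n)$ time. This handles the first two sentences of the lemma.

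Next I would argue the bounds for a (\typeROne) query $s$. The canonical path $\pi$ from the root to the proper predecessor of $r_s/2$ (or the left spine, in the degenerate case) has length $O(\log n)$ and is located in $O(\log n)$ time by a single root-to-leaf descent driven by comparisons with $r_s/2$. The canonical nodes are the right children of vertices on $\pi$ that are not themselves on $\pi$, plus possibly the bottom of $\pi$; so there are $O(\log n)$ of them and they are identified in $O(\log n)$ time once $\pi$ has been traced. For (\typeRTwo) I do the symmetric thing with two descents, once for the proper predecessor $t_1$ of $r_1$ and once for the successor $t_2$ of $r_2$, producing paths $\pi_1,\pi_2$ of length $O(\log n)$; the canonical nodes are the off-path siblings on the portions of $\pi_1$ and $\pi_2$ below the lowest common ancestor of the two endpoints (plus the extremal-leaf exceptions spelled out in the definition), which again gives $O(\log n)$ nodes in $O(\log n)$ total time.

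Finally I would establish the partition property, which is where one has to be careful with the half-open convention and the treatment of the degenerate cases. For (\typeROne), a site $t$ has $r_t \geq r_s/2$ iff, in the in-order leaf ordering, $t$ lies strictly to the right of the proper predecessor of $r_s/2$ (or everywhere, if that predecessor is undefined). The collection of off-path right children of $\pi$, augmented in the degenerate case by the last vertex of $\pi$, is exactly the set of maximal subtrees lying to the right of $\pi$, and their leaf-sets partition the set of sites to the right of the endpoint of $\pi$; this is the same argument as for a one-sided segment-tree query. For (\typeRTwo) the same reasoning applied to $\pi_1$ and $\pi_2$ separately shows that the selected subtrees partition the sites with $r_t \in [r_1, r_2)$, after taking the usual care at the leaves corresponding to $t_1$ and $t_2$ so as not to double-count or omit boundary sites. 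The main subtlety, and the part I expect to take the most care, is getting the two degenerate cases (predecessor or successor not existing) and the half-open vs.\ closed interval semantics to line up so that the partition is exact; everything else is a direct application of the standard canonical-decomposition bounds.
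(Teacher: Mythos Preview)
Your proposal is correct and follows essentially the same approach as the paper: sort by radius, build a balanced binary tree, compute canonical intervals by merging bottom-up, and appeal to the $O(\log n)$ height for the query bounds. The paper's own proof is considerably terser than yours---in particular it dismisses the partition property with ``holds by construction''---so your more explicit treatment of the one-sided and two-sided decompositions (and the degenerate endpoint cases) is a welcome elaboration rather than a different argument.
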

\begin{proof}
Since a site $s \in S$ appears in $O(\log n)$
canonical intervals, the total size of the
canonical intervals is $O(n \log n)$. To construct
$B$, we sort $S$ according to the radii $r_s$,
and we build $B$ on top of the sorted list. To find
the (sorted) canonical intervals, we perform
a postorder traversal of $B$, copying and merging the
child intervals for each parent node.

The bound on the canonical nodes for $q$ follows,
since $B$ has height $O(\log n)$. To find them,
we trace the canonical
paths for $q$ in $B$. The partition
property holds by construction.
\end{proof}

\subsection{Queries of Type (\typeROne)}
\label{sec:rone}

We build a compressed
quadtree on $S$, and we perform the range searches
the compressed quadtree. 
It is possible to compute a compressed 
quadtree for each canonical interval without 
logarithmic overhead. Since Lemma~\ref{lem:disks_close_together} 
gives us plenty of
freedom in choosing the squares for our range queries,
we take squares from the grid that underlies the 
quadtree. This allows us to reduce the range searching problem
to predecessor search in a linear list, a task that
can be accomplished by one top-down traversal of $B$. Details follow.

\subparagraph*{Hierarchical grids, Z-order, compressed 
quadtrees.}
We translate and scale $S$ (and the associated radii), 
so that $S$ lies in the interior of the unit square 
$U = [0,1]^2$ and so that all radii are at most $\sqrt{2}$. 
We define a sequence 
of hierarchical grids that subdivide $U$.
The grid $G_0$ consists of the single cell $U$.
The grid $G_i$, $i \geq 1$, consists of the $2^{2i}$ square 
cells with side length $2^{-i}$ and pairwise
disjoint interiors that cover $U$. The hierarchical
grids induce an infinite four-regular tree 
$\T$: the vertices are the cells of
$\G = \bigcup_{i = 0}^\infty G_i$.
The unit square $U$ is the root, and for $i = 1, \dots$, a 
cell $\sigma$ in $G_i$ is the child of the 
cell in $G_{i-1}$ that contains it.
We make no explicit 
distinction between a 
vertex of $\T$ and its corresponding cell.

\begin{figure}
\center
%!tex root = ../cycles.tex
\begin{tikzpicture}[inner/.style = {draw, circle, minimum size= 4mm,inner sep=0pt}]
\coordinate (topleft) at (0,2);
\coordinate (topright) at (2,2);
\coordinate (bottomleft) at (0,0);
\coordinate (bottomright) at (2,0);
\coordinate (midleft) at (0,1);
\coordinate (midright) at (2,1);
\coordinate (midbottom) at (1,0);
\coordinate (midtop) at (1,2);
\coordinate (center) at (1,1);

\fill[ouryellow] (topleft) -- (midtop) -- (center) -- (midleft) -- (topleft);
\fill[ourblue] (midtop) -- (center) -- (midright) -- (topright);
\fill[ourred] (midleft) -- (center) -- (midbottom) -- (bottomleft);
\fill[ourdarkgreen]  (center) -- (midbottom) -- (bottomright) -- (midright);
\draw (topleft) -- (topright) -- (bottomright) -- (bottomleft) -- (topleft);
\draw (midleft) -- (midright);
\draw (midbottom) -- (midtop);
\path[draw, thick, >=stealth, >->] (0.5,1.5) -> (1.5,1.5) -> (0.5,0.5) -> (1.5,0.5);

\begin{scope}[scale=1.5, shift={(3,1.25)}, level distance = 2em, level 1/.style={sibling distance=1.2em}]

  \node[inner, fill= black]{}
  child {node[inner, fill=ouryellow]{}}
  child {node[inner, fill=ourblue]{}}
  child {node[inner, fill=ourred]{}}
  child {node[inner, fill=ourdarkgreen]{}};
\end{scope}
\begin{scope}[scale=1.5, shift={(5.5,1.25)}, level distance = 2em, level 1/.style={sibling distance=1.2em}]
\tikzstyle{subtree}= [draw,isosceles triangle,shape border rotate=90, anchor = north,isosceles triangle stretches, minimum height=15mm, minimum width=12mm]
\tikzset{decoration={snake,amplitude=.4mm,segment length=1.5mm,
                       post length=0mm,pre length=0mm}}
  \node[inner, fill= black, text=white]{\(\rho\)}
  child[child anchor= apex] {
   node[subtree]{}
   node[inner, fill=ouryellow](T){\(\tau\)}
   node[inner, below right=9.5mm and -1mm of  T](S){\(\sigma\)}
   edge[decorate](T)--(S)
   }
  child{
  node[inner, fill=ourblue]{}
  }
  child[child anchor=apex] {
  node[subtree]{}
  node[inner, fill=ourred](R1){}
  node[inner, below left=7.5mm and -1.3 mm of  R1](R){\(\tilde{\sigma}\)}
  edge[decorate](R1)--(R)
   }
  child {node[inner, fill=ourdarkgreen]{}};
\end{scope}
\end{tikzpicture}
\caption{$Z$-Order. On the very right we have \(\sigma \leq_Z \tau \leq_Z \tilde{\sigma}\).}
\label{fig:zorder}
\end{figure}
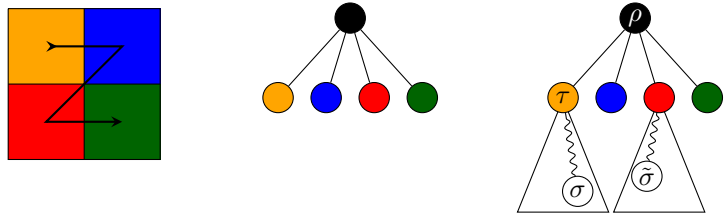

The \emph{$Z$-order} $\leq_Z$ is
a total order on the cells of $\G$; 
see~\cite{BuchinMu11} for more details.
Let $\sigma, \tau \in \G$.
If $\sigma \subseteq \tau$, 
then $\sigma \leq_Z \tau$: and if $\tau\subseteq \sigma$,
then $\tau \leq_Z \sigma$,
If $\sigma$ and $\tau$ are unrelated in $\T$,
let $\rho$ be the lowest common ancestor of 
$\sigma$ and $\tau$ in $\T$, and let $\sigma'$ and $\tau'$
be the children of $\rho$ with $\sigma \subseteq \sigma'$
and $\tau \subseteq \tau'$.
We set $\sigma \leq_Z \tau$ 
if $\sigma'$ is before $\tau'$ in the
order shown in \cref{fig:zorder};
and $\tau \leq_Z \sigma$, otherwise.
The next lemma shows that
given $\sigma, \tau \in \G$, we can decide if
$\sigma \leq_Z \tau$ 
in constant time.

\begin{lemma}[Chapter~2 in Har-Peled~\cite{har-peled_geometric_2008}]
\label{obs:compareconst}
Suppose the floor function 
and the first differing bit in the binary 
representations of two given real numbers can be computed 
in $O(1)$ time.
Then, we can decide in $O(1)$ time for two given
cells $\sigma, \tau \in \G$  whether 
$\sigma \leq_Z \tau$ or $\tau \leq_Z \sigma$.
\end{lemma}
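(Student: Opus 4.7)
The plan is to reduce the Z-order comparison to a lowest-common-ancestor computation in the infinite quadtree $\T$. I would represent each cell $\sigma \in G_{i_\sigma}$ by its level $i_\sigma$ together with its lower-left corner $(x_\sigma,y_\sigma)$, which together determine $\sigma$ uniquely. The first step is to dispose of the nested case: assuming without loss of generality $i_\sigma \leq i_\tau$, I use the floor function to compute in $O(1)$ the cell of $G_{i_\sigma}$ containing $(x_\tau,y_\tau)$; if this cell equals $\sigma$, then $\tau \subseteq \sigma$ and hence $\tau \leq_Z \sigma$ by the first clause of the definition. If $\sigma = \tau$, both orderings hold. Otherwise $\sigma$ and $\tau$ are unrelated in $\T$.

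In the unrelated case, the lowest common ancestor $\rho$ of $\sigma$ and $\tau$ sits at some level $\ell \geq 0$, which is characterized as the deepest level at which the ancestors of $\sigma$ and $\tau$ coincide. The $x$-coordinate of the level-$k$ ancestor of $\sigma$ is obtained by truncating $x_\sigma$ to $k$ fractional bits, and likewise for $y$; so the two ancestors at level $k$ agree iff the first $k$ bits of $x_\sigma,x_\tau$ agree and the first $k$ bits of $y_\sigma,y_\tau$ agree. Applying the first-differing-bit primitive to $(x_\sigma, x_\tau)$ and to $(y_\sigma,y_\tau)$ yields bit positions $d_x$ and $d_y$, and I conclude $\ell = \min(d_x,d_y)-1$. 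At level $\ell+1$, the cells $\sigma$ and $\tau$ lie in different children of $\rho$; I extract the $(\ell+1)$-st fractional bits of $x_\sigma, y_\sigma, x_\tau, y_\tau$ via the floor function in $O(1)$ to identify which of the four quadrants of $\rho$ each cell belongs to. Comparing those two quadrants under the fixed child ordering shown in \cref{fig:zorder} (top-left, top-right, bottom-left, bottom-right) then gives the required Z-order in $O(1)$.

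The main obstacle I expect is not the algorithm itself but handling the degenerate cases cleanly: fixing a convention when one coordinate agrees exactly, so that $d_x = \infty$ or $d_y = \infty$ causes the LCA level to be correctly determined by the other coordinate; verifying that the extracted bits indeed pinpoint the children of $\rho$ containing $\sigma$ and $\tau$ (in particular, when $d_x = d_y$ the two cells differ in both the horizontal and vertical direction, whereas when $d_x < d_y$ they share the same horizontal band); and confirming that the nested-case test via the floor function is unambiguous at shared boundaries. Once these conventions are pinned down, every arithmetic step runs in $O(1)$ under the stated model of computation, which is what the lemma asserts.
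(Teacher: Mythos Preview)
The paper does not supply its own proof of this lemma; it simply attributes the statement to Har-Peled~\cite{har-peled_geometric_2008}, Chapter~2, and moves on. Your argument is the standard one for this fact and is essentially what the cited reference does: reduce the comparison to locating the lowest common ancestor in $\T$ by applying the first-differing-bit primitive to each coordinate, then read off the $(\ell+1)$-st bits to identify the two children of $\rho$ and compare them under the fixed quadrant order. The degenerate cases you flag (one coordinate pair agreeing exactly so that $d_x=\infty$ or $d_y=\infty$, nested cells, and boundary conventions) are the right ones to check, and your proposed resolutions work once a consistent tie-breaking rule for cell boundaries is fixed.
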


For a site
$s \in S$, let $\sigma_s$ be the largest cell in $\G$ that 
contains only $s$. The \emph{quadtree} for $S$ is 
the smallest connected subtree of $\T$ that contains the root $U$
and all cells $\sigma_s$, for $s \in S$.
The \emph{compressed quadtree} $\C$ for $S$ is obtained from the
quadtree by contracting any maximal
path of vertices with only one child into a single
edge.  Vertices that were at the top of 
such a path are now called \emph{compressed} vertices. 
The compressed quadtree for $S$ has 
$O(n)$ vertices, and it can be constructed in $O(n\log n)$ 
time (see, e.g.,~\cite[Appendix~A]{BuchinLoMoMu11} and 
\cite{har-peled_geometric_2008}).

The \emph{linearized compressed quadtree} $\cL$ for $S$ is
the sorted sequence of cells obtained by listing the nodes of 
$\C$ according to a postorder traversal, were the children of
a node $\sigma \in \C$ are visited according to the $Z$-order
from Figure~\ref{fig:zorder}. The cells
in $\cL$ appear in increasing $Z$-order, 
and range searching for a given cell 
$\sigma \in \G$ reduces to a simple predecessor search
in $\cL$, as is made explicit in the following lemma.

\begin{lemma}\label{lem:quadtreeorder}
Let $\sigma$ be a cell of $\G$, and let $\cL$ be the 
linearized compressed quadtree on $S$. Let 
$\tau = \max_Z \{\rho \in \cL \mid \rho \leq_Z \sigma\}$ be 
the \emph{$Z$-predecessor} of $\sigma$ in $\cL$ ($\tau = \emptyset$,
if the predecessor does not exist).
Then, if $\sigma \cap \tau = \emptyset$, then also
$\sigma \cap S = \emptyset$, and if $\sigma \cap \tau \neq \emptyset$,
then $\sigma \cap S = \tau \cap S$.
\end{lemma}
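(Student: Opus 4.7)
The plan is to interpret the Z-order as a postorder traversal of $\C$ (with children visited in the fixed NW/NE/SW/SE order) and to pin $\tau$ down inside $\C$ via a structural argument. Write $\cL_\sigma = \{\rho \in \cL : \rho \subseteq \sigma\}$. I would first prove an interval property: if $\cL_\sigma$ is nonempty, then its Z-maximum $\rho_{\max}$ is exactly the Z-predecessor $\tau$ of $\sigma$. Indeed, for any $\rho' \in \cL$ with $\rho_{\max} <_Z \rho' \leq_Z \sigma$, the two Z-order rules force $\rho' \subseteq \sigma$: the nested case is compatible with $\rho' \leq_Z \sigma$ only when $\rho' \subseteq \sigma$, while the disjoint case places $\rho'$ in a Z-earlier sibling subtree than $\rho_{\max}$ via the LCA child order, a contradiction. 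Hence $\tau = \rho_{\max} \subseteq \sigma$, and in particular $\sigma \cap \tau \neq \emptyset$.

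Next, I would introduce the smallest cell $\rho^*$ of $\C$ with $\sigma \subseteq \rho^*$ (which exists because $U \in \C$) and the $\T$-child $c$ of $\rho^*$ containing $\sigma$. If $\rho^* = \sigma$, then $\sigma \in \cL$ and $\tau = \sigma$ gives the lemma immediately. Otherwise $c \notin \C$ (else $c$ would be a smaller $\C$-ancestor of $\sigma$), so any $\C$-descendant of $\rho^*$ in the direction of $c$ is a proper $\T$-descendant of $c$; denote the unique such $\C$-child of $\rho^*$ by $\rho_c$, if one exists. If $\rho_c$ exists and $\rho_c \subseteq \sigma$, the interval property identifies $\tau = \rho_c$; the compressedness of $\C$ (each internal $\C$-node branches) ensures that every site of $c$ lies in $\rho_c$'s $\C$-subtree, so combined with $\sigma \subseteq c$ we obtain $\sigma \cap S = c \cap S = \rho_c \cap S = \tau \cap S$. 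Otherwise ($\rho_c$ absent or disjoint from $\sigma$), we have $\cL_\sigma = \emptyset$, and then $\tau$ must be disjoint from $\sigma$: $\tau \subseteq \sigma$ is excluded, while $\tau \supsetneq \sigma$ would force $\sigma <_Z \tau$, contradicting $\tau \leq_Z \sigma$. In this case a parallel argument yields $\sigma \cap S = \emptyset$: the sites of $c$ collapse into $\rho_c$'s subtree by compressedness, and since $\rho_c$ is either absent or separated from $\sigma$, any putative site of $\sigma \cap S$ would have to sit in $\rho_c \cap \sigma = \emptyset$.

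The delicate step will be the identification $\tau = \rho_c$ in the nonempty regime via the postorder structure, together with the book-keeping that all sites of $c$ are collected under $\rho_c$. This is where compressedness enters crucially: because internal $\C$-nodes have at least two $\C$-children, the sites of $c$ cannot split across disjoint $\C$-subtrees of $\rho^*$, so their common representative $\rho_c$ captures all of $c \cap S$ in a single $\cL$-cell. Matching the two regimes against the dichotomy $\sigma \cap \tau = \emptyset$ versus $\sigma \cap \tau \neq \emptyset$ then completes the argument.
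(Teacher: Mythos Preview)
Your proof is correct and follows essentially the same approach as the paper. The paper's argument is more compressed: it directly asserts that $\C_\sigma$ (your $\cL_\sigma$) forms a connected subtree of $\C$, takes its root as $\tau$, and observes that every cell of $\C \setminus \C_\sigma$ is either $Z$-smaller than all of $\C_\sigma$ or $Z$-larger than $\sigma$. Your explicit construction of $\rho^*$, $c$, and $\rho_c$ simply names that root and supplies the justification the paper leaves implicit, including the step $\sigma \cap S = \tau \cap S$, which the paper states without argument.
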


\begin{proof}
Let $\C$ be the compressed quadtree on $S$, and 
let $\C_\sigma = \{ \tau \in \C \mid \tau \subseteq \sigma\}$ 
be the cells in 
$\C$ that are contained in $\sigma$. If $\C_\sigma$
is non-empty, then $\C_\sigma$ is a connected subtree
of $\C$. Let $\tau$ be the root of this subtree.
Then, $\tau = \max_Z \{\rho \in \C_\sigma\}$, and
$\tau \leq_Z \sigma$. Furthermore, all other cells
in $\C \setminus \C_\sigma$ are either smaller than
all cells in $\C_\sigma$ or larger than $\sigma$.
Thus, $\tau$ is the $Z$-predecessor of $\sigma$  in 
$\cL$, and $\sigma \cap S = \tau \cap S \neq \emptyset$.
Otherwise, if $\C_\sigma = \emptyset$, the 
$Z$-predecessor of $\sigma$ in $\cL$ either does not exist
or is disjoint from $\sigma$. Thus, in this case,
we have $\emptyset = \sigma \cap \tau = \sigma \cap S$.
\end{proof}

\subparagraph*{The search algorithm.}
For a site $s \in S$, we define the 
\emph{neighborhood} $N(s)$ of $s$ as all 
cells in $\G$ with side length
$2^{\lfloor \log_2{r_s}\rfloor}$ that 
intersect $D_s$. The neighborhood
will be used to approximate $D_s$ for the
range search in the quadtrees.

\begin{lemma}\label{lem:neighbor_size}
There is a constant $\beta$ such that
$|N(s)| \leq \beta$ for all $s \in S$.
\end{lemma}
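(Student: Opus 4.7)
The plan is to give a direct geometric bound on the number of grid cells of side length $\ell := 2^{\lfloor \log_2 r_s \rfloor}$ that a disk of radius $r_s$ can intersect. The choice of $\ell$ gives the key relation $\ell \le r_s < 2\ell$, so the disk $D_s$ is bounded in size relative to the grid cells in $N(s)$.

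First I would observe that $D_s$ is contained in the axis-aligned bounding square of side length $2r_s$ centered at $s$, which in turn satisfies $2r_s < 4\ell$. Hence $D_s$ lies inside an axis-aligned square of side length $4\ell$. Any cell of $G_{\lfloor \log_2(1/\ell)\rfloor}$ (that is, a cell of $\mathcal{G}$ with side length exactly $\ell$) that intersects $D_s$ must therefore also intersect this bounding square.

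Next I would count: an axis-aligned interval of length $4\ell$ meets at most $\lceil 4\ell/\ell\rceil + 1 = 5$ of the (closed) intervals of length $\ell$ forming the one-dimensional grid in each coordinate direction. A standard product argument across the two coordinates then yields that at most $5 \times 5 = 25$ grid cells of side $\ell$ intersect the bounding square, and therefore $|N(s)| \le 25$. Setting $\beta := 25$ completes the proof.

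The only mild subtlety is the interaction between the definition of $\ell$ and the hierarchical grids $G_i$: I need to make sure the cells counted by $N(s)$ really are cells of one specific level $G_i$ of $\mathcal{G}$ (they are, by the definition of $N(s)$, which fixes the side length), and that the grid cells at that level tile the plane in the standard axis-aligned way. No deeper obstacle arises; the bound $\beta$ is an absolute constant independent of $s$.
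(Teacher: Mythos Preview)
Your proposal is correct and follows essentially the same approach as the paper: both use the relation $r_s/2 < \ell \le r_s$ for $\ell = 2^{\lfloor \log_2 r_s\rfloor}$ and a $5\times 5$ counting argument to conclude $\beta = 25$. The paper phrases it as a covering (a $5\times 5$ block of cells of side length at least $r_s/2$ covers $D_s$), while you phrase it as bounding $D_s$ by a square of side $<4\ell$ and counting intersecting grid cells per coordinate; these are the same argument.
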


\begin{proof}
We have 
$r_s/2 < 2^{\lfloor \log_2{r_s}\rfloor}$, and
a $5\times5$ grid with cells of 
side length $r_s/2$ covers $D_s$, no 
matter where $s$ lies; 
see Figure~\ref{fig:constantneighborhood}.
Thus, the lemma holds with $\beta = 25$.
\end{proof}
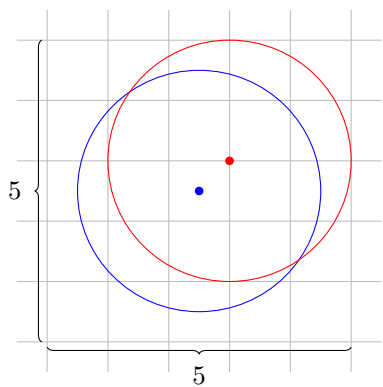
\begin{figure}
\center
%!tex root = ../cycles.tex
\begin{tikzpicture}[scale=0.8]
\foreach \x in {-2.5,...,2.5}
{
  \draw[gray!50!white] (-3,\x) -- (3 , \x);
  \draw[gray!50!white] (\x,-3) -- (\x, 3);
}

%\foreach \x in {-1.5,...,2}
%{
%  \draw[opacity = 0, fill opacity =0.5 ,fill=ourblue!70!white] (-2,\x) -- (2 , \x) -- (2,\x-1) -- (-2,\x-1);
%}
%\foreach \x in {-0.5,...,2.5}
%{
%  \draw[opacity = 0,fill opacity =0.5 ,  fill=ourred!70!white] (-0.5,\x) -- (2.5, \x) -- (2.5,\x-1) -- (-0.5,\x-1);
%}

\coordinate (center1) at (0,0);
\coordinate (moved1) at ($ (center1) +(2,0)$); 
\node[ourblue,circle, inner sep = 1, draw, fill] at (center1){};
\node[ourblue, draw] at (center1) [circle through =  {(moved1)}]{};
\coordinate (center2) at (0.5,0.5);
\coordinate (moved2) at ($ (center2)+ (2,0)$); 
\node[ourred, circle, inner sep = 1, draw, fill] at (center2){};
\node[ourred, draw] at (center2) [circle through =  {(moved2)}]{};

\draw[decoration={brace,raise=2pt},decorate] (2.5,-2.5) -- node[below=6pt] {$5$}  (-2.5,-2.5);
\draw[decoration={brace,raise=2pt},decorate] (-2.5,-2.5) -- node[left=6pt] {$5$}  (-2.5,2.5);
%\draw[ourblue,decoration={brace,raise=1pt,mirror},decorate] (-2, 2) -- node[left=6pt] {$18$}  (-2,-2);
%\draw[ourblue,decoration={brace,raise=1pt,mirror},decorate] (-2,-2) -- node[below=6pt] {$18$}  (2,-2);
\end{tikzpicture}
\caption{The neighborhood of a site has constant 
size}\label{fig:constantneighborhood}
\end{figure}

We now show that a linearized compressed quadtree for
each canonical interval can be found without logarithmic
overhead.

\begin{lemma}
\label{lem:lin_quad}
We can compute for each
$v \in B$ the linearized quadtree $\cL_v$
for the sites in $\cD_v$ in $O(n \log n)$ time.
\end{lemma}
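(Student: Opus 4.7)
The plan is to process $B$ bottom up, maintaining at each node $v$ a list of the sites of $\cD_v$ sorted in $Z$-order, and to build $\cL_v$ directly from this sorted sequence. For each leaf $v$ the $Z$-sorted list of $\cD_v$ consists of the single site stored at $v$. For each internal node $v$ with children $v_1, v_2$, I merge the two $Z$-sorted lists for $\cD_{v_1}$ and $\cD_{v_2}$ in mergesort fashion; by \Cref{obs:compareconst}, each comparison between two sites (viewed as their smallest enclosing cells in $\G$) takes $O(1)$ time, so the merge runs in $O(|\cD_v|)$ time.

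Given the sites of $\cD_v$ in $Z$-order, I then build the compressed quadtree $\C_v$ for $\cD_v$ in $O(|\cD_v|)$ time using the standard incremental algorithm that inserts the points in $Z$-order while maintaining the rightmost root-to-leaf path with a stack, popping back to the least common ancestor of the previous point and the current one before adding the new leaf; see~\cite[Chapter~2]{har-peled_geometric_2008}. A single postorder traversal of $\C_v$ that visits children in $Z$-order then produces $\cL_v$ in a further $O(|\cD_v|)$ time.

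By \Cref{lem:canonical} we have $\sum_{v \in B} |\cD_v| = O(n \log n)$, so both the total merging time and the total compressed-quadtree construction time are $O(n \log n)$, as claimed. The main obstacle is avoiding a $\log n$-factor blow-up from sorting each $\cD_v$ separately in $Z$-order, which would give $O(n \log^2 n)$ overall; the bottom-up merge along $B$ charges each site only $O(1)$ work per level of $B$, and this, together with the linear-time compressed-quadtree construction from a $Z$-ordered input, yields the stated bound.
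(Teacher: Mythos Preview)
Your proof is correct and achieves the same $O(n\log n)$ bound, but it proceeds in the opposite direction from the paper. The paper builds the compressed quadtree $\C$ for all of $S$ once at the root of $B$ and then works \emph{top-down}: for each child $w$ of a node $v$, it obtains $\C_w$ from $\C_v$ by a postorder traversal that deletes the leaves whose sites are not in $\cD_w$ and cleans up the resulting tree (removing empty internal nodes and re-compressing single-child chains), in $O(|\C_v|)$ time; the linearized lists $\cL_v$ are then read off by traversing each $\C_v$. You instead work \emph{bottom-up}: merge the $Z$-sorted site lists along $B$ as in mergesort, and at each node rebuild the compressed quadtree from the sorted list in linear time via the stack-based incremental construction. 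The paper's route avoids appealing to the linear-time quadtree-from-$Z$-sorted-input construction, relying only on a simple deletion/cleanup procedure; your route avoids the case analysis of that cleanup (handling compressed nodes and grandchild reattachment) by rebuilding from scratch, at the cost of invoking a slightly more sophisticated but standard tool. Both exploit $\sum_{v\in B} |\cD_v| = O(n\log n)$ from \Cref{lem:canonical} in the same way.
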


\begin{proof}
For each $v \in B$, we build the 
compressed quadtree $\C_v$ for  
$\cD_v$, as follows:  at the root, 
we compute the compressed
quadtree $\C$ for $S$ in 
$O(n \log n)$ time~\cite{BuchinLoMoMu11,har-peled_geometric_2008}.
Then, we traverse $B$.
Given the compressed quadtree $\C_v$
for a node $v \in B$, we compute 
$\C_w$ for a child $w$ of $v$ 
as follows.
We do a postorder traversal of $\C_v$. 
In each leaf $\nu$ of $\C_v$, we check if the site $s$ in $\nu$
is in $\cD_w$, by testing $r_s$.
If not, we remove $\nu$; otherwise, we keep it. 
In each inner vertex $\nu$ of $\C_v$, we check if $\nu$ 
has any remaining children. If not, we remove $\nu$.
If $\nu$ has exactly one remaining child that is 
not a compressed vertex, we mark $\nu$ as compressed and continue.
If the only remaining child of $\nu$ is compressed, we 
remove this child, connect $\nu$
to its grandchild, and mark $\nu$ as compressed.
This takes $O(|\C_v|)$ time and gives $\C_w$.

Once all the compressed quadtrees $\C_v$ are available,
we traverse $B$ again to
find the linearized compressed quadtrees $\cL_v$ 
by a traversal of
each $\C_v$. The total time to find the $\cL_v$
is $O(n \log n + \sum_{v \in B} |\C_v|) =
O(n \log n)$, since $|C_v| = O(|\cD_v|)$, for all $v \in B$, and
$\sum_{v \in B} |\cD_v| = O(n \log n)$ by 
Lemma~\ref{lem:canonical}.
\end{proof}

Using the linearized compressed quadtrees, the
range searching problem can be solved by a 
batched predecessor search, using a single traversal
of $B$.

\begin{lemma}
\label{lem:cquadtree_search}
The range searching problem (\typeROne) can be solved
in $O(n \log n)$ time.
\end{lemma}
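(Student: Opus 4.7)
The plan is to perform the range searches with a single top-down traversal of $B$, using the linearized compressed quadtrees $\cL_v$ from \cref{lem:lin_quad}. First I would build $B$, the canonical intervals, and all $\cL_v$ in $O(n\log n)$ total time by \cref{lem:canonical,lem:lin_quad}. For each site $s\in S$, I would then create the $|N(s)|=O(1)$ query cells from \cref{lem:neighbor_size}, giving $O(n)$ query cells in total, and sort them once in $Z$-order in $O(n\log n)$ time using \cref{obs:compareconst}. Each query pair $(s,\tau)$ needs to be answered against $\cL_v$ for every canonical node $v$ of $s$, and by \cref{lem:quadtreeorder} each answer can be read off from the $Z$-predecessor of $\tau$ in $\cL_v$.

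Next, I would traverse $B$ top-down while maintaining at each node $v$ a $Z$-sorted list $Q_v$ of the query pairs $(s,\tau)$ whose canonical path passes through $v$; the root's list is the sorted list of all queries. At $v$, I would split $Q_v$ by comparing $r_s/2$ with the radius threshold separating $v$'s two children, producing sublists $Q_v^L$ and $Q_v^R$ of queries going left and right; the split costs $O(|Q_v|)$ and preserves the $Z$-order, since the decision depends only on $s$, not on $\tau$. Whenever $Q_v^L$ is nonempty, the right child of $v$ is a canonical node for every $s$ with $(s,\tau)\in Q_v^L$, so I would perform a single linear merge of $Q_v^L$ with $\cL_{\text{right child of }v}$ in $Z$-order, obtaining the $Z$-predecessor of every $\tau\in Q_v^L$. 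For each predecessor $\rho$, I would enumerate the leaves of the subtree of the compressed quadtree at the right child of $v$ rooted at $\rho$, which by \cref{lem:quadtreeorder} are exactly the sites of $\cD_{\text{right child of }v}$ inside $\tau$, and filter each candidate by the test $t\in D_s$. The degenerate case in which the last node of the canonical path is itself a canonical node (when no site has radius below $r_s/2$) is handled analogously.

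To keep the total work linear in the output, I would abort an enumeration as soon as $\alpha+1$ sites are found inside a single cell $\tau$. If that happens, $\tau$ has side length $2^{\lfloor\log_2 r_s\rfloor}\le r_s$ and contains more than $\alpha$ sites of radius at least $r_s/2\ge r_\tau/4$, so $\tau$ is a valid witness for the ``OR'' branch of (\typeROne) and the algorithm halts. Similarly, if after the traversal some $s$ has more than $\alpha$ surviving outgoing edges, the axis-parallel square of side length $2r_s$ centered at $s$ contains $D_s$ and hence more than $\alpha$ sites of radius at least $(2r_s)/4$, another valid ``OR'' witness. Otherwise every $s$ has at most $\alpha$ outgoing edges and the collected list is returned as the ``EITHER'' answer.

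For the running time, the split-and-merge during the traversal costs $O\bigl(\sum_v (|Q_v|+|\cL_v|)\bigr)=O(n\log n)$: $\sum_v |\cL_v|=O(n\log n)$ by \cref{lem:canonical,lem:lin_quad}, and each of the $O(n)$ query cells occurs in $O(\log n)$ lists $Q_v$ so $\sum_v |Q_v|=O(n\log n)$. Enumeration and filtering contribute $O(\alpha)=O(1)$ per query per canonical node, hence $O(n\log n)$ overall. The main obstacle is exactly this batching: answering the predecessor queries one at a time, or re-sorting the queries at every canonical node, would give $O(n\log^2 n)$. Sorting the queries globally once in $Z$-order and then only \emph{splitting} them during the traversal of $B$ is what saves the extra logarithmic factor.
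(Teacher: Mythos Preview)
Your proposal is correct and follows essentially the same approach as the paper: build all $\cL_v$ via \cref{lem:lin_quad}, replace each query $s$ by the $O(1)$ grid cells of $N(s)$, sort these once globally in $Z$-order, push the sorted query list down $B$ by linear scans/splits so that at every canonical node the relevant queries arrive already $Z$-sorted, and then merge with $\cL_v$ to obtain the $Z$-predecessors; abort with a square witness if a cell holds more than $\alpha$ large-radius sites, and otherwise aggregate the hits per $s$. The only cosmetic difference is that the paper separates the two distribution phases (first build the list $\Q'_v$ of queries whose canonical \emph{path} passes through $v$, then extract from the parent list the queries having $v$ as a canonical \emph{node}), whereas you fuse them into one split-and-merge per node; both cost $O(n\log n)$ for the same reason.
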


\begin{proof}
We apply Lemma~\ref{lem:lin_quad} to find the linearized
quadtree for every canonical interval in \(B\).
Remember that the queries in (\typeROne) are the complete set \(S\). We split
each query $s \in S$ into subqueries, by considering
the neighborhood $N(s)$ of $s$.
Let
$\Q' = \bigcup_{s \in S} \big\{(\sigma, s) \mid \sigma \in N(s)\big\}$ 
be the set of \emph{split queries}. The purpose of the 
split queries is to approximate the associated disks for the
query sites by cells from the hierarchical grid.
By Lemma~\ref{lem:neighbor_size}, $|\Q'| = O(n)$.

We now perform range queries for the cells in the 
split queries.
For this, we first sort the elements of $\Q'$ in the $Z$-order of 
their first components, in $O(n \log n)$ time.
Next, we distribute the split queries along
their canonical paths in $B$.
For each $v \in B$, let $\Q'_v$ be the sorted
sublist of queries in $\Q'$ (in the $Z$-order
of the first component) that have $v$ on their
canonical path. By Lemmas~\ref{lem:canonical} 
and~\ref{lem:neighbor_size}, we
have $\sum_{v \in B} |\Q'_v| = O(n \log n)$. To 
find the lists $\Q'_v$ for all $v \in B$ in $O(n \log n)$ time,
we perform a pre-order traversal of $B$, computing the
lists for the children from the lists of the parents.
More precisely, given the sorted list $\Q'_v$ for a node $v \in B$,
we can find the sorted list $\Q'_w$ for a child 
$w$ of $v$ in time $O(|\Q'_v|)$ by scanning $\Q'_v$
from left to right and by copying the elements that
also appear in $\Q'_w$.
Finally, we distribute the split queries into their canonical nodes.
The canonical nodes of a query are children of the nodes
on its canonical path. Thus, we can find for each
$v \in B$ the sorted list $\Q''_v$
of split queries with $v$ as a canonical node
as follows: we iterate over all
non-root nodes $v \in B$, and we scan the list $\Q'_w$ of 
the parent node $w$ of $v$. We copy all queries
that have $v$ as a canonical node from $\Q'_w$ into $\Q''_v$.
This takes $O(n \log n)$ time.

Next, we iterate over all $v \in B$, and we merge the
lists $\Q''_v$ with the lists $\cL_v$, 
in $Z$-order. This takes 
$O\big(\sum_{v \in B} |\cL_v| + |\Q''_v|\big) = O(n \log n)$ time.
By Lemma~\ref{lem:quadtreeorder}, we obtain for each \((\sigma,s) \in \Q''_v\) a cell \(\tau_v^{\sigma,s}\).
If \(\sigma \cap \tau_v^{\sigma,s} \neq \emptyset\) we know that \(\sigma \cap \cD_v = \tau_v^{\sigma,s}\cap \cD_v\).
Since these sites are all from \(\cD_v\) they all have radius at least \(r_s/2\). 
We can find all these sites in \(O(k)\) time, where \(k\) is the
output size.
If $k > \alpha$, we stop and report $\sigma$ as
a square with many sites of large radius.\footnote{Note that here the radii are \(\geq r_s/2\) inside of the cells \(\sigma\).
This might be larger than the value \(2^{\lceil \log_2 r_s\rceil}/4\) needed by (\typeROne).
But still, if there are more than \(\alpha\) sites in \(\sigma\), we still have a triangle in a square. Otherwise we will later determine that each disk contains few sites of radius at least \(r_s/2\).}

Otherwise, we use the sites in $\sigma$ to accumulate
the sites for the query disk $D_s$. 
This we can do by considering all canonical nodes of \(s\) and for each cell \(\sigma\) iterate over the sites contained in \(\sigma\).
In each such cell there are at most \(\alpha\) sites. For each site \(t \in \sigma\) we can check if \(t\in D_s\). 
If we find a query disk $D_s$
with more than $\alpha$ sites of large radius,
we stop and report its enclosing square with many sites of large radius.\footnote{\(r=2r_s\) is the side length of the enclosing square, the radii are at least \(r/4\) as desired.}
Otherwise, 
for each $s \in S$, we have found the at most $\alpha$ sites 
of radius at least $r_s/2$ in
$D_s$. 
The whole algorithm takes $O(n \log n)$ time.
\end{proof}

\subsection{Queries of Type (\typeRTwo)}
\label{sec:RangeR2}

We use the tree structure of the 
canonical intervals (i) to construct quickly the search structures for 
each canonical interval; and (ii) to
solve all queries for a canonical interval in one batch.
We 
exploit the connection
between planar disks and
three-dimensional polytopes.
Let $U = \big\{(x,y,z) \mid x^2 + y^2 = z \big\}$ be 
the three-dimensional unit paraboloid. 
For a site $s \in S$, the 
lifted site $\hat{s}$ is 
the vertical projection of $s$ onto $U$.
Each disk $D_s$ is transformed into an upper halfspace 
$\widehat{D}_s$, 
so that the projection of 
$\widehat{D}_s \cap U$ onto the $xy$-plane is the set
$\R^2 \setminus D_s$;\footnote{This halfspace is bounded  by the plane 
\(z=2x_sx-x_s^2+2y_sy-y_s^2+r_s^2\), where \(s=(x_s, y_s)\).}
see Figure~\ref{fig:lifteddisks}.
The union of a set of disks in $\R^2$ 
corresponds to the intersection of the lifted upper 
halfspaces in $\R^3$.

\begin{figure}
\centering
\scalebox{0.7}{
%!tex root = ../cycles.tex
\makeatletter % https://tex.stackexchange.com/a/48776/121799
\tikzoption{canvas is xy plane at z}[]{%
  \def\tikz@plane@origin{\pgfpointxyz{0}{0}{#1}}%
  \def\tikz@plane@x{\pgfpointxyz{1}{0}{#1}}%
  \def\tikz@plane@y{\pgfpointxyz{0}{1}{#1}}%
  \tikz@canvas@is@plane
}

\tikzoption{at fix}[]{%
\def\tikz@plane@origin{\pgfpointxyz{-2}{1}{5}}%
  \def\tikz@plane@x{\pgfpointxyz{-7}{0}{0}}%
  \def\tikz@plane@y{\pgfpointxyz{0}{6}{0}}%
  \def\tikz@plane@z{\pgfpointxyz{0}{0}{-1}}%
  }

\tikzset{use path/.code=\tikz@addmode{\pgfsyssoftpath@setcurrentpath#1}}
\makeatother
\begin{tikzpicture}
[scale=0.75,
x={(1,0)},z={(0,1)}, y={(0,0.5)},
>=latex, 
point/.style={draw, circle, inner sep=1pt, fill}
]
\coordinate (bl) at (-5,-3.1,0);
\coordinate (br) at (-5,5,0);
\coordinate (tl) at (5,-3.1,0);
\coordinate (tr) at (5,5,0);

\coordinate (center) at ($(bl)!0.5!(tr)$);

\coordinate (paraleft) at (-3,0,7);
\coordinate (pararight) at (3,0,7);
\path[name path= paraboloid, save path =\para] (paraleft) .. controls (-2,0,-2) and (2,0,-2) .. (pararight);

\draw[fill=orange, fill opacity=0.3, draw opacity=0] (bl)--(br)-- (tr) -- (tl) -- cycle;

\begin{scope}[canvas is xy plane at z=0,transform shape]
\coordinate (left) at (-5,0);
\coordinate (right) at (5,0);
\coordinate (c1) at ($(left)!0.55!(right)$);
\coordinate (c2) at ($(left)!0.8!(right)$);
\node[thick, draw, name path = circle,circle through = (c1), ourblue, fill=gray!30!white] at ($(c1)!0.5!(c2)$){};
\node[ourblue] at ($(c1)!.5!(c2)$){\huge$D_s$};
\node[point,ourdarkgreen, inner sep=1.5pt, label={[ourdarkgreen]below:\huge$p$}](t) at (-2,1,0){};
\path[name path=horiz] (-5,0,0) -- (5,0,0);
\path[name intersections =  {of = horiz and circle, name = circint}];
%\path let \p1=(circint-1) in coordinate (C) at (\x1,\y1,5);

\node[point, inner sep=1.5pt, ourblue] at (circint-1) {};
\node[point, inner sep=1.5pt, ourblue] at (circint-2) {};
\end{scope}

\path[name path = vert1 ] (circint-1) -- ($(circint-1) + (0,0,5)$);
\path[name path = vert2 ] (circint-2) -- ($(circint-2) + (0,0,5)$);
\path[name intersections = {of = vert1 and paraboloid, name= vertint1}];
\path[name intersections = {of = vert2 and paraboloid, name= vertint2}];
\draw[dotted, very thick,ourblue] (circint-1) -- (vertint1-1);
\draw[dotted, very thick, ourblue] (circint-2) -- (vertint2-1);
\coordinate (leftint) at ($(vertint1-1)!1.5!(vertint2-1)$);
\coordinate (rightint) at ($(vertint2-1)!1.9!(vertint1-1)$);
\coordinate (rightint1) at ($(vertint2-1)!2!(vertint1-1)$);
%\node[point] at (leftint){};
\coordinate (leftoff) at (-6,0,0);
\coordinate (rightoff) at (4,0,0);
\coordinate (bll) at ($(leftint) + (leftoff)$);
\coordinate (tll) at ($(rightint) + (leftoff)$);
\coordinate (brl) at ($(leftint) + (rightoff)$);
\coordinate (trl) at ($(rightint) + (rightoff)$);

%\draw[fill,red!50!white, fill opacity=0.5, draw opacity = 0] (leftint) -- (rightint) -- (tll)-- (bll);
\node[ourblue, above right = of vertint1-1] {$\hat{D}_s$};

\draw[draw = ourblue,fill=gray!50!ourblue!70!white][use path=\para];

%Scope to clip area
\begin{scope}
\clip (leftint) -- (rightint) --(trl) -- (brl);
\draw[draw =gray,fill=gray!30!white ][use path=\para];
\end{scope}

 \begin{scope}[canvas is xy plane at z=7, transform shape, rotate=-1.5]
%\draw[black, fill =gray!10!white] ($(paraleft)!0.5!(pararight)$) ellipse (3cm and 1cm);
\draw[ourblue, fill=gray!50!ourblue!70!white] ($(paraleft)!0.5!(pararight)$) circle (3);
 \end{scope}

\coordinate (arrowpointu) at ($(leftint)!0.95!(rightint) -(1,0)$);
%\node[point] at (arrowpoint) {};
\draw[ourblue, ->, dashed] ($(leftint)!(arrowpointu)!(rightint)$)--(arrowpointu);
\coordinate (arrowpointl) at ($(leftint)!.1!(rightint) -(1,0)$);
%\node[point] at (arrowpoint) {};
\draw[ourblue, ->, dashed] ($(leftint)!(arrowpointl)!(rightint)$)--(arrowpointl);
\node[point, ourblue] at (vertint1-1) {};
\node[point,ourblue] at (vertint2-1) {};

\begin{scope}[at fix, transform shape]
\node[point, inner sep=1.5, ourdarkgreen, label={[ourdarkgreen]left:\huge$\hat{p}$}, transform shape](lt) at (-2,1,5){};
\end{scope}
\draw[ourblue,thick] (leftint)--(rightint);
\draw[dotted, very thick, ourdarkgreen] (lt)-- (t);

\end{tikzpicture}
}
\caption{Lifting disks and points. For $\hat{D}$ only the 
bounding plane is shown.}
\label{fig:lifteddisks}
\end{figure}

\begin{lemma}
\label{lem:complicated}
The range searching problem (\typeRTwo) can be solved
in $O(n \log n)$ expected time.
\end{lemma}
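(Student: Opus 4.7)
The plan is to use the 3D lifting transformation of \cref{fig:lifteddisks}, together with the canonical interval decomposition of \cref{lem:canonical}, to reduce the batched query to a batched point-in-polytope problem, and to exploit the tree structure of canonical intervals to amortize the work. First, distribute each of the $O(n)$ query triples $(s,r_1,r_2)$ to its $O(\log n)$ canonical nodes in $B$ and sort the resulting $O(n\log n)$ (query, canonical-node) pairs by canonical node in $O(n\log n)$ time. For a canonical node $v$, the sub-query ``does some $u\in\cD_v$ satisfy $s\in D_u$?'' is, by the lifting, equivalent to asking whether $\hat{s}$ lies outside the three-dimensional convex polytope $P_v:=\bigcap_{u\in\cD_v}\widehat{D}_u$, which has complexity $O(|\cD_v|)$.

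Second, I would build the polytopes $P_v$ bottom-up in $B$. At a leaf $v$, $P_v$ is a single upper halfspace; at an internal node $v$ with children $v_1,v_2$ we have $P_v=P_{v_1}\cap P_{v_2}$, which by a linear-time algorithm for intersecting two convex polyhedra in $\R^3$ (for instance Chazelle's) can be constructed in $O(|P_{v_1}|+|P_{v_2}|)=O(|\cD_v|)$ time. Because $\sum_{v\in B}|\cD_v|=O(n\log n)$ by \cref{lem:canonical}, the construction phase costs $O(n\log n)$ in total, and we can equip each $P_v$ with a standard point-location structure within the same bound.

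Third, I would process the $O(n\log n)$ (query, canonical-node) pairs grouped by canonical node, answering each sub-query by point location in the appropriate $P_v$. The main obstacle is to do this in $O(n\log n)$ total time: a black-box point-location call at each pair costs $O(\log|\cD_v|)$ and only yields $O(n\log^2 n)$. I would shave the extra logarithmic factor with a Clarkson-Shor style randomized incremental construction that interleaves the halfspace insertions at $v$ with the sub-queries assigned to $v$, maintaining conflict pointers between active queries and current polytope facets. The right random order, together with reusing conflict information propagated along the parent-child edges of $B$ (the inclusion $P_v\subseteq P_w$ for every descendant $w$ of $v$ lets children inherit the parent's conflict classifications), should amortize the work to expected constant per (query, canonical-node) pair, giving the claimed $O(n\log n)$ expected bound. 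Verifying this amortization, or replacing it with a cleaner fractional-cascading-type argument, is the step I expect to be the most delicate.
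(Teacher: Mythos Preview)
Your first two steps---distributing queries to canonical nodes and building the halfspace-intersection polytopes $P_v$ bottom-up via linear-time polytope intersection---are exactly what the paper does. The gap is in your third step.

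The paper does \emph{not} do point location in $P_v$. Instead, it batches the queries at each node $v$ by taking the convex hull $\widehat{Q}_v$ of the \emph{lifted query points} assigned to $v$ (all of which lie on the paraboloid, hence are extreme). It computes these hulls top-down in $B$: the convex hull of all lifted queries is built once at the root, and then at each node it extracts the hull of the relevant subset in expected linear time using the result of Chazelle and Mulzer that the convex hull of any subset of the vertices of a known $3$-polytope can be found in $O(m)$ expected time. With both $\widehat{Q}_v$ and $P_v$ in hand, answering all sub-queries at $v$ reduces to intersecting the two polytopes in linear time and checking which vertices of $\widehat{Q}_v$ vanish in $\widehat{Q}_v\cap P_v$; those are precisely the query points lying in $\bigcup_{u\in\cD_v}D_u$. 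This gives $O(|\cD_v|+|\widehat{Q}_v|)$ per node and $O(n\log n)$ overall.

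Your proposed substitute---a Clarkson--Shor RIC with conflict pointers inherited along $B$---is not an argument yet, and the inheritance you appeal to does not line up with the combinatorics. The inclusion $P_v\subseteq P_w$ for a descendant $w$ only lets you carry ``inside'' verdicts downward, whereas the answers you need are ``outside''. More importantly, the canonical \emph{nodes} of a single query are siblings of the nodes on its canonical path and are pairwise unrelated in $B$, so there is no parent--child chain along which a query's conflict pointer could be reused. Without the convex-hull-of-queries trick (or a fully specified cascading mechanism), your third step stays at $O(n\log^2 n)$.
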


\begin{proof}
For each  $v \in B$, we construct a three-dimensional representation
of the union of the disks in the canonical interval  $\cD_v$.
As explained above, this is the intersection $\E_v$ of the lifted 
three-dimensional halfspaces $\widehat{D}_s$, for $s \in \cD_v$.
The intersection of two three-dimensional convex polyhedra 
with a total of $m$ vertices 
can be computed in $O(m)$ time~\cite{Chazelle92,chan_simpler_2016}.
Therefore, we can construct all the polyhedra
$\E_v$, for $v\in B$, in overall $O(n\log n)$ time, by 
a bottom-up traversal of $B$ (by Lemma~\ref{lem:canonical},
the total number of vertices of these polyhedra is $O(n \log n)$).

For the query processing,
we compute a polytope $\widehat{Q}_v$ 
for each $v \in B$. The polytope $\widehat{Q}_v$ is
obtained by determining all the points $p$ that
appear in a query $(p, r_1, r_2)$ that has $v$ as a
canonical node, lifting those points $p$ to their 
three-dimensional representations $\hat{p}$, and taking the convex hull 
of the resulting three-dimensional point set.
The lifted query points all lie on the unit paraboloid $U$,
so every lifted query point appears as a vertex on $\widehat{Q}_v$.
To find all polytopes $\widehat{Q}_v$, for $v \in B$,
efficiently, we proceed as follows:
let $A$ be the three-dimensional point 
set obtained by taking all points that appear in a query
and by lifting them onto the unit paraboloid.
We compute the convex hull of $A$ in $O(n \log n)$ 
time. Then, for each $v \in B$, we find the convex hull 
of all lifted queries that have $v$ in their canonical path. 
This can be done in $O(n \log n)$ total expected time by a top-down
traversal of $B$. We already have the polytope for the
root of $B$. To compute the polytope for a child node, given that
the polytope for the parent node is available,
we use the fact that for any 
polytope $\E$ in $\R^3$ with $m$ vertices, we can compute the convex hull 
of any subset of the vertices of $\E$ in $O(m)$ expected 
time~\cite{ChazelleMu11}. 
Once we have for each $v \in B$ the convex hull of the lifted query points that
have $v$ on their canonical \emph{path}, we can compute for each $v \in B$
the polytope $\widehat{Q}_v$ that is the convex hull of the 
lifted query points that have $v$ as a 
canonical \emph{node}. For this, we consider the canonical path polytope 
stored at the parent node of $v$, and we again use the algorithm 
from~\cite{ChazelleMu11} to extract the convex hull for the 
lifted query points that
have $v$ as a canonical node.

Now that the polyhedra
$\widehat{Q}_v$ and the polytopes $\E_v$ are available, for all
$v \in B$, we can answer the query as follows:
for each node $v \in B$, 
we must check for vertices of $\widehat{Q}_v$ that 
do not lie inside of $\E_v$. These are exactly 
the vertices of $\widehat{Q}_v$ that are not 
vertices of $\widehat{Q}_v \cap \E_v$. 
As mentioned, the intersections $\widehat{Q}_v \cap \E_v$ 
can be found in linear 
time for each node $v \in B$, for a total time $O(n \log n)$, 
and once the intersection is available, we can easily
find
all vertices $\hat{p}$ of $\widehat{Q}_v$ that
are not vertices of \(\widehat{Q}_v\cap \E_v\) (e.g., using radix sort). 
If for any such intersection $\widehat{Q}_v \cap \E_v$, 
there is a lifted site $\hat{s} \in \widehat{Q}_v$  
that is not a vertex of $\widehat{Q}_v \cap \E_v$, we report $s$ as
the result of the range search. Otherwise, we report that the 
range search is unsuccessful.
\end{proof}

\section{Finding the Shortest Triangle in a Transmission Graph}
\label{sec:finding_the_smallest_triangle}

We extend 
Theorem~\ref{thm:transmission_triangle_unweighted}
to find the shortest triangle 
in  $T(S)$. As in Section~\ref{sec:disk_triangle_weighted}, 
we solve the decision problem:
given $W > 0$, does $T(S)$ have a directed triangle of perimeter 
at most $W$?
We set $\ell = W/\sqrt{27}$, and
call a site $s \in S$ \emph{large} if 
$r_s > \ell$. We let $S_\ell  \subseteq S$ be the 
set of all large sites.

\begin{lemma}\label{lem:find_small_triangles}
We can find a triangle in $T(S \setminus S_\ell)$ 
of perimeter at most $W$
in $O(n\log n)$ time, if it exists.
\end{lemma}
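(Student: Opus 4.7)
The plan is to reduce the task to the unweighted directed triangle problem on the subset $S\setminus S_\ell$. The key observation is that the choice $\ell=W/\sqrt{27}$ is tailored so that every directed triangle in $T(S\setminus S_\ell)$ automatically has perimeter at most $W$; hence there is no perimeter constraint to check, and we can simply call the algorithm of \cref{thm:transmission_triangle_unweighted}.

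More concretely, I would argue as follows. By definition, every $s\in S\setminus S_\ell$ has $r_s\le \ell$. Any directed edge $s\to t$ in $T(S\setminus S_\ell)$ satisfies $|st|\le r_s\le \ell$. Thus, if $a\to b\to c\to a$ is a directed triangle in $T(S\setminus S_\ell)$, its Euclidean perimeter is
\[
|ab|+|bc|+|ca|\le r_a+r_b+r_c\le 3\ell=\frac{3W}{\sqrt{27}}=\frac{W}{\sqrt{3}}<W.
\]
Consequently, every directed triangle of $T(S\setminus S_\ell)$ is automatically a triangle of perimeter at most $W$, and it suffices to detect an arbitrary directed triangle in this transmission graph.

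I would then apply \cref{thm:transmission_triangle_unweighted} to the input set $S\setminus S_\ell$. This yields, in $O(n\log n)$ expected time, either a directed triangle (which certifies the YES answer and, by the bound above, has perimeter at most $W$) or a proof that $T(S\setminus S_\ell)$ contains no directed triangle at all. In the latter case there is nothing to report for this lemma, since any triangle found must use a site of $S\setminus S_\ell$ only.

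The only thing one needs to be careful about is the trivial verification that the chosen value of $\ell$ indeed yields $3\ell<W$; beyond this there is no real obstacle, as the heavy lifting is already done in the unweighted transmission-graph algorithm from \cref{sec:trianglesdir} and the batched range-searching machinery of the previous section.
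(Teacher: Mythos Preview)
Your proof is correct and reaches the same conclusion as the paper, but the perimeter bound you use is different and in fact simpler. The paper argues geometrically: taking $s$ to be the vertex of largest radius in the triangle, it observes that all three vertices lie in $D_s$, and then invokes the fact that the maximum-perimeter triangle inscribed in a disk of radius $r_s$ is equilateral with perimeter $3\sqrt{3}\,r_s\le\sqrt{27}\,\ell=W$. Your edge-by-edge bound $|ab|+|bc|+|ca|\le r_a+r_b+r_c\le 3\ell=W/\sqrt{3}<W$ is more elementary and actually tighter; it shows that the specific constant $\sqrt{27}$ is not forced by this lemma (any $\ell\le W/3$ would do here), whereas the paper's argument is exactly tight for the chosen $\ell$. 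Both proofs then finish identically by invoking \cref{thm:transmission_triangle_unweighted} on $S\setminus S_\ell$.
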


\begin{proof}
Any triangle in $T(S \setminus S_\ell)$ has 
perimeter at most $W$: consider a directed
triangle $stu$ in 
$T(S \setminus S_\ell)$
with $r_s \geq \max\{r_t, r_u\}$.
Then we have $t, u \in D_s$, so 
the triangle $stu$ lies in $D_s$.
Elementary calculus shows that 
a triangle of maximum perimeter in
$D_s$ must be equilateral 
with its vertices on $\partial D_s$,
so any triangle contained in $D_s$ has perimeter at most
$3\cdot \sqrt{3}\cdot r_s  \leq \sqrt{27}\cdot \ell =  W$.
We can find a triangle in $T'$ in $O(n\log n)$ time by 
Theorem~\ref{thm:unweightedtriangledir}.
\end{proof}

It remains to check for 
triangles of perimeter at most $W$ 
with at least one large vertex.
Some such triangles have to be considered 
individually, while the others can be handled efficiently 
in batch mode.
The following lemma shows that we may assume that
there are few edges from $S \setminus S_\ell$ to $S_\ell$.

\begin{lemma}
\label{lem:small_outdegree}
If $T(S)$ does not have a triangle of perimeter 
at most $W$, every site in $S_\ell$ has at most six incoming 
edges from $S \setminus S_\ell$.
Furthermore, in $O(n \log n)$ time, we can either find 
a triangle of perimeter at most $W$ in $T(S)$ or determine
for each site in $S_\ell$ all incoming edges from $S \setminus S_\ell$.
\end{lemma}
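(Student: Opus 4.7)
The structural claim I would prove by contradiction. Suppose some $s \in S_\ell$ has incoming edges from seven small sites $t_1, \ldots, t_7 \in S \setminus S_\ell$; each $t_i$ satisfies $|t_i s| \le r_{t_i} \le \ell$, so all seven lie in the disk of radius $\ell$ around $s$. Partition the plane around $s$ into six closed sectors of opening angle $\pi/3$; by pigeonhole two of the $t_i$, call them $p$ and $q$ with $r_p \le r_q$, land in the same sector. The law-of-cosines computation (with $\cos\theta \ge 1/2$) gives $|pq|^2 \le |ps|^2 + |qs|^2 - |ps||qs| \le \max(|ps|, |qs|)^2$, hence $|pq| \le r_q$ and the edge $q \to p$ is present. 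Since $s$ is large, $r_s > \ell \ge r_q \ge |sq|$, so $s \to q$ is also an edge, and combined with the given $p \to s$ the triple $s \to q \to p \to s$ is a directed triangle of perimeter at most $|sq| + |qp| + |ps| \le 2r_q + r_p \le 3\ell = W/\sqrt{3} < W$, a contradiction.

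For the algorithm I use a uniform grid of cell side $\ell$, subdivided into $2 \times 2$ sub-cells of side $\ell/2$, and bucket all sites into it. First I scan the sub-cells: if some sub-cell contains three large sites, then each pair has Euclidean distance at most the diameter $\ell/\sqrt{2} < \ell$, so (since every large radius exceeds $\ell$) all six directed edges among the three sites are present, and I output the resulting directed triangle, whose perimeter is at most $3\ell/\sqrt{2} < W$. Otherwise each sub-cell holds at most two large sites, so each grid cell holds at most eight, and every $3 \times 3$ block of cells holds $O(1)$ large sites. Then, for each small site $t$, I iterate over the constantly many large sites in the $3 \times 3$ block of cells centered at $t$'s cell---since $r_t \le \ell$, this block contains $D_t$---and test $|ts| \le r_t$ in constant time; every successful test is appended to the incoming list of the corresponding large site. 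If some large site ever collects a seventh incoming small-site edge, I apply the sector argument of the first part to its seven witnesses to output a directed triangle of perimeter at most $3\ell < W$ and stop. Otherwise the accumulated lists at the end are exactly the required incoming edges.

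The running time is $O(n)$ for bucketing and the sub-cell scan with the floor function (or $O(n \log n)$ via sorting, which is the claimed bound), and $O(1)$ per small site thereafter, so everything fits in $O(n \log n)$ time. The main point requiring care is consistency of the geometric constants: the choice $\ell = W/\sqrt{27}$ makes both $3\ell = W/\sqrt{3}$ and $3\ell/\sqrt{2} = W/\sqrt{6}$ strictly smaller than $W$, so every triangle produced either by the sub-cell test or by the sector argument is certified to have perimeter at most $W$. There is no deeper obstacle beyond verifying these constants and the standard observation that a radius-$\ell$ disk around $t$ is covered by the $3 \times 3$ neighborhood of the cell of $t$.
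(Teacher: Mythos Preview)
Your structural argument (seven small in-neighbours force two of them into a common $\pi/3$-sector around the large site, yielding a short directed triangle) is the same as the paper's.

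Your algorithmic part is correct but genuinely different from the paper's. The paper reuses the heavy (\typeROne) machinery from Section~5: compressed quadtrees, $Z$-order, and batched predecessor search over all canonical intervals, to either exhibit a small square overloaded with large-radius sites or to list, for every small site $s$, all sites of radius at least $r_s/2$ in $D_s$; the small-to-large edges are then filtered from that output. You instead exploit that in this lemma every small radius is bounded by the fixed scale $\ell$, so a single uniform grid of side $\ell$ suffices: a $2\times 2$ sub-cell test catches dense clusters of large sites (giving a triangle in a box of diameter $\ell/\sqrt{2}$), and otherwise each $3\times 3$ block holds $O(1)$ large sites, against which every small site can be tested directly. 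This is more elementary and actually runs in $O(n)$ time with the floor function; the paper's route has the advantage only of reusing an already-built tool. Both approaches feed the same seven-witness sector argument when a large site accumulates too many incoming small edges, so the overall logic coincides from that point on.
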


\begin{proof}
Suppose there is a square $\sigma$ in the plane with side length
$0 < r < 2\ell$ such that $\sigma$ contains more than $\alpha$ 
sites $s$ of radius \(r_s\geq r/4\). 
Then, Lemma~\ref{lem:disks_close_together} 
shows that $T(S)$ contains a triangle that lies in $\sigma$.
Specifically, since $\sigma$ has 
side length at most $r/6 = 2\ell/6$, the definition
of $\ell$ implies that the triangle has perimeter at most $W$. 
If there is no such square, it follows that there
is no site $s$ with $r_s \leq \ell$ such that $D_s$ contains
more than $\alpha$ sites of radius at least $r_s/2$, as 
otherwise $s$ could be enclosed by 
a square of side length $2r_s \leq 2\ell$ that contains 
many sites of large radius.
Thus, every small site $s$ has $O(1)$ outgoing edges to 
sites with radius at least $r_s/2$.
In particular, there are $O(n)$ edges from small sites to large sites.
We can use a suitable variant of (\typeROne) so that in 
$O(n \log n)$ time we can EITHER find a square of
side length $0 < r < 2\ell$ that contains more than $\alpha$ sites $s$ of 
radius $r_s \geq r/4$r OR determine that for every site $s$ 
with $r_s \leq \ell$, there are at most $\alpha$ sites in $D_s$ of radius
at least $r_s/2$. Furthermore, in the second case, we  explicitly
get all sites of radius at least $r_s/2$ in each $D_s$ with $r_s \leq \ell$

Thus, if the second case applies, we obtain 
all edges from $S \setminus S_\ell$ to $S_\ell$. Suppose there is a 
large site $s$ of indegree at least $7$. Then, there must be two
sites $t,u \in S \setminus S_\ell$ such that the angle between
the edges $ts$ and $us$ 
is less than $\pi/3$. Thus, the distance $|tu|$ is less than the 
maximum of $|ts|$ and $|us|$, so there is a directed edge with 
endpoints $t$ and $u$ and the sites $s, t, u$ form a triangle 
of perimeter at most $3\ell \leq W$.
\end{proof}

Next, we want to limit the number of relevant edges between large sites.
For this, we subdivide the plane with a grid $G$ of side 
length $\ell/\sqrt{2}$.
Then, we have the following:

\begin{lemma}\label{lem:constant_large_sites}
A triangle contained in a cell $\sigma \in G$ 
has perimeter at most $W$. 
If there is no triangle in $\sigma$, then $\sigma$ 
contains $O(1)$ large sites.
We can check for such triangles in $O(n\log n)$ overall expected time.
\end{lemma}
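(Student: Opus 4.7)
The plan is to verify each of the three assertions separately; none should present a substantial obstacle.

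For the perimeter bound, I would observe that a cell $\sigma \in G$ has side length $\ell/\sqrt{2}$ and hence diameter exactly $\ell$. Any triangle whose three vertices lie in $\sigma$ therefore has every edge of length at most $\ell$, so its perimeter is at most $3\ell = 3W/\sqrt{27} = W/\sqrt{3} < W$, as required.

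For the bound on the number of large sites, I would argue by contrapositive: suppose $\sigma$ contains three distinct large sites $s, t, u$, each with associated radius strictly larger than $\ell$. Every pairwise Euclidean distance among $s, t, u$ is at most the diameter $\ell$ of $\sigma$, which is strictly less than each of $r_s, r_t, r_u$. Hence all six directed edges among $s, t, u$ belong to $T(S)$, and in particular $s \to t \to u \to s$ is a directed triangle lying entirely within $\sigma$. So if $\sigma$ contains no triangle, then $|S_\ell \cap \sigma| \le 2$, which is $O(1)$.

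For the algorithmic claim, I would first bucket the sites into the cells of $G$ using the floor function in $O(n)$ time, and then, for every non-empty cell $\sigma$, run the algorithm of Theorem~\ref{thm:unweightedtriangledir} on the induced transmission graph $T(S \cap \sigma)$. Each such invocation takes expected time $O(|S \cap \sigma| \log |S \cap \sigma|)$, and since $\log|S \cap \sigma| \le \log n$ while $\sum_\sigma |S \cap \sigma| = n$, the total expected cost is $O(n \log n)$. If any invocation returns a triangle, then by the first claim its perimeter is less than $W$, so it directly witnesses a positive answer to the decision problem; and it is a bona fide directed triangle of $T(S)$ because $T(S \cap \sigma)$ is precisely the subgraph of $T(S)$ induced on $S \cap \sigma$. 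The only point requiring even mild care is the summation of the per-cell running times, and this is routine.
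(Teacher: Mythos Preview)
Your proof is correct and follows essentially the same approach as the paper's own proof. The only cosmetic difference is that the paper bounds the perimeter of a triangle in $\sigma$ by the sharper $(1+\sqrt{2})\ell$ (the actual maximum-perimeter triangle inscribed in a square of side $\ell/\sqrt{2}$) rather than your $3\ell$ bound from the diameter, but both are comfortably below $W = \sqrt{27}\,\ell$, and the remaining two claims are argued identically.
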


\begin{proof}
The maximum perimeter of a triangle contained in 
$\sigma$ is $(1 + \sqrt{2})\ell < W$.
Furthermore, if there are at least three large sites in $\sigma$, 
these large sites form a triangle, since the disk of a large 
site covers $\sigma$.
By applying Theorem~\ref{thm:unweightedtriangledir} 
to the induced subgraph in each
cell of $G$, we can find such a triangle in $O(n\log n)$ total 
expected time.
\end{proof}

We define the neighborhood $N(\sigma)$ of a cell $\sigma \in G$ as the 
$5\times 5$ block of cells centered at $\sigma$.
Let $t$ be a site and $\sigma$ the cell containing $t$, 
then the neighborhood $N(t)$ of $t$ are all sites 
contained in $N(\sigma)$.
Since the side length of a grid cell 
is $W/3\sqrt{6}$, each triangle of perimeter at 
most $W$ is completely contained in the neighborhood of some cell.

\begin{lemma}\label{lem:remaining_mixed}
We can check the remaining triangles in $O(n)$ overall time.
\end{lemma}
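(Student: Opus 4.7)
The plan is to split the remaining triangles by the number of large vertices they contain and, in each case, iterate over a carefully chosen vertex, leveraging the preprocessing already done. The key facts I use are: the complete list of small-to-large edges is available by \cref{lem:small_outdegree} (and each large site has $O(1)$ small predecessors); each cell of $G$ contains $O(1)$ large sites by \cref{lem:constant_large_sites}; and any triangle of perimeter at most $W$ has each edge of length at most $W/2<3\ell$, so every pair of its vertices lies within a constant-size block of cells. Bucketing the large sites by their containing cell in $O(n)$ time lets me retrieve the $O(1)$ large sites in any such block in $O(1)$ time.

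If all three vertices are large, I iterate over every $u\in S_\ell$, enumerate the $O(1)$ candidate pairs of large sites in the constant-size block around the cell of $u$ via the bucketing, and test the three directed-edge conditions together with the perimeter bound in $O(1)$ per triple. This step runs in $O(|S_\ell|)=O(n)$ time.

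If the triangle has two large vertices and one small vertex, I iterate over each small-to-large edge $v\to w$ (there are $O(n)$ such by \cref{lem:small_outdegree}); the remaining third vertex lies in a constant-size block around $v$ and $w$, so it is either a large site fetched from the bucketing ($O(1)$ candidates) or a precomputed small predecessor of one of the two already-located large sites ($O(1)$ candidates). The three subcases distinguished by which of $s,t,u$ is the small vertex are handled in exactly this fashion, so each small-to-large edge produces $O(1)$ candidate triples, each checked in $O(1)$, for a total cost of $O(n)$.

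The delicate case is one large vertex $u$ with two small vertices $s$ and $t$: a naive iteration runs into the problem that a single large $u$ can contain many small sites in $D_u$, and a single small $t$ can have many small predecessors, so neither (\typeROne) nor (\typeRTwo) style preprocessing directly yields $O(1)$ candidates. The trick I would use is to pivot the iteration onto the small vertex $s$ of the cyclic triangle $s\to t\to u\to s$: for each small $s\in S\setminus S_\ell$, enumerate the $O(1)$ large sites $u$ in a constant-size block around the cell of $s$ via the bucketing, test $u\to s$ in $O(1)$, and whenever this succeeds iterate over the $O(1)$ precomputed small predecessors $t$ of $u$ and test $s\to t$ in $O(1)$. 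Every triangle of this form is found when we process its small vertex $s$, and the total cost is $O(|S\setminus S_\ell|)=O(n)$.

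The main conceptual obstacle I expect is precisely this one-large-two-small case; pivoting the iteration onto the small vertex $s$ is what makes the two key bounds (at most $O(1)$ large sites per cell and at most $O(1)$ small predecessors per large site) apply simultaneously, so every step of the inner loops has $O(1)$ candidates. Adding the three cases together yields $O(n)$ overall time.
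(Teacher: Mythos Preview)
Your proof is correct, but it takes a different route from the paper's. You split into three cases by the number of large vertices and, in the delicate one-large-two-small case, pivot the outer loop onto the small vertex $s$ receiving the edge from the large vertex so that the two $O(1)$ bounds (large sites per neighborhood, small predecessors per large site) compose. The paper instead gives a single unified argument: label the triangle $s \to u \to t \to s$ so that $t$ has the largest radius (hence $t \in S_\ell$), observe that the predecessor $u$ of $t$ has $O(1)$ candidates regardless of whether $u$ is large (by \cref{lem:constant_large_sites}) or small (by \cref{lem:small_outdegree}), and then simply iterate over \emph{all} $s \in N(t)$. The cost is bounded by charging each check to $s$: every site $s$ lies in $O(1)$ grid neighborhoods, and each such neighborhood contains only $O(1)$ anchor pairs $(t,u)$, so $s$ is examined $O(1)$ times. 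This avoids your case distinction entirely and in particular sidesteps the ``pivot on $s$'' trick---the paper never needs to control the third vertex at all, only to charge it. Your decomposition is perfectly valid and perhaps more transparent about where each bound is used; the paper's argument is shorter and buys the simplicity of a single loop structure.
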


\begin{proof}
Consider a remaining triangle  
$sut$ with $r_t \geq \max\{r_u, t_s\}$.
Then, $t \in S_\ell$, and $s, t, u$ all lie in $N(t)$.
By Lemma~\ref{lem:constant_large_sites}, there 
are $O(1)$ large candidates for $u$ in $N(t)$,
and by Lemma~\ref{lem:small_outdegree},
there are $O(1)$ small candidates for $u$.
Having fixed a $t$ and a possible candidate $u$, 
we iterate over all $s\in N(t)$ and check if $s$, $u$, and 
$t$ form a triangle with weight at most $W$.
Every site $s$ is contained in $O(1)$ grid neighborhoods, 
and since there are $O(1)$ candidate pairs in each grid
neighborhood, $s$ participates in $O(1)$ explicit checks.
The result follows.
\end{proof}
The following theorem summarizes the considerations
in this section.
\begin{theorem}
It takes $O(n\log n)$ expected time 
to find the shortest triangle in 
a transmission graph.
\end{theorem}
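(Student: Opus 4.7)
The plan is to first establish the decision procedure promised implicitly throughout Section~\ref{sec:finding_the_smallest_triangle}: given $W > 0$, we can decide in $O(n \log n)$ expected time whether $T(S)$ contains a directed triangle of perimeter at most $W$. Once this decision procedure is in hand, I would plug it into Chan's randomized optimization framework (Lemma~\ref{lem:chan}) in exactly the same way as for weighted disk graphs in Theorem~\ref{thm:shortesttriangle}.

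For the decision procedure, the work has essentially been done in Lemmas~\ref{lem:find_small_triangles}, \ref{lem:small_outdegree}, \ref{lem:constant_large_sites}, and~\ref{lem:remaining_mixed}, so I would just orchestrate them. Classify each candidate triangle by the number of large vertices it contains. Lemma~\ref{lem:find_small_triangles} disposes of the $0$-large case in $O(n\log n)$, since \emph{every} triangle in $T(S \setminus S_\ell)$ has perimeter at most $W$. For triangles with at least one large vertex, I would first invoke Lemma~\ref{lem:small_outdegree}, which either returns a good triangle or gives us, for every large site, the $O(1)$ incoming edges from $S \setminus S_\ell$. Then I would invoke Lemma~\ref{lem:constant_large_sites}, which either returns a good triangle (one entirely inside a grid cell of $G$) or certifies that every cell of $G$ contains $O(1)$ large sites. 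Finally, Lemma~\ref{lem:remaining_mixed} checks all remaining candidate triangles in $O(n)$ time, using the fact that any triangle of perimeter at most $W$ fits inside the $5\times 5$ neighborhood of some grid cell. All four steps together run in $O(n \log n)$ expected time.

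For the optimization step, I would apply Lemma~\ref{lem:chan} with $\alpha = 3/4$, $r = 4$, $\eps = 1$, and $\delta(n) = O(n \log n)$, using the same site-partitioning trick as in Theorem~\ref{thm:shortesttriangle}: enumerate $S = \{s_1, \dots, s_n\}$ and form four subsets $S_0, \dots, S_3$ by placing $s_i$ into $S_j$ iff $i \not\equiv j \pmod 4$, so $|S_j| = \lceil 3n/4\rceil$ and every triple $\{a,b,c\}\subseteq S$ appears in at least one $S_j$. Hence the perimeter of the shortest triangle in $T(S)$ equals the minimum of the perimeters of the shortest triangles in the four induced transmission graphs $T(S_j)$, and the recurrence solved by Lemma~\ref{lem:chan} gives $O(n \log n)$ expected running time overall.

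The only subtle point I expect — and what I would double-check carefully — is that the decision procedure is genuinely complete: every directed triangle of perimeter at most $W$ falls into one of the four buckets (no large vertex, handled by Lemma~\ref{lem:find_small_triangles}; otherwise inspected through the large site of maximum radius, whose disk contains the other two vertices and hence keeps the whole triangle within a bounded neighborhood by the argument preceding Lemma~\ref{lem:remaining_mixed}). Provided this case analysis is exhaustive, the rest is a direct assembly of earlier tools.
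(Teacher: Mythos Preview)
Your proposal is correct and follows essentially the same approach as the paper: assemble Lemmas~\ref{lem:find_small_triangles}--\ref{lem:remaining_mixed} into an $O(n\log n)$ decision procedure, then apply Chan's framework (Lemma~\ref{lem:chan}) with the same four-subset partition used in Theorem~\ref{thm:shortesttriangle}. The paper's own proof is terser---it simply points back to the decision algorithm and to Theorem~\ref{thm:shortesttriangle}---but you have accurately spelled out the details it leaves implicit.
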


\begin{proof}
We already saw that there is an $O(n\log n)$ time 
decision algorithm for the problem. 
As in Theorem~\ref{thm:shortesttriangle},
the result follows from an application of 
Chan's randomized optimization 
technique~\cite{Chan99} (restated in Lemma~\ref{lem:chan}).
\end{proof}

\section{Conclusion}

Once again, disk graphs and transmission 
graphs prove to be
a simple yet powerful graph model where
difficult algorithmic problems admit faster solutions.
It would be interesting to find a \emph{deterministic}
$O(n \log n)$ time algorithm for finding
a shortest triangle in a disk graph. Currently, 
we are working on extending our results to the girth problem 
in transmission graphs; can we find an equally simple and 
efficient algorithm as for disk graphs?

\bibliography{literature}

\newcommand{\SortNoop}[1]{}
\begin{thebibliography}{10}

\bibitem{AlonYuZw97}
Noga Alon, Raphael Yuster, and Uri Zwick.
\newblock Finding and counting given length cycles.
\newblock {\em Algorithmica}, 17(3):209--223, 1997.

\bibitem{dBCvKO}
Mark {\SortNoop{Berg}}de~Berg, Otfried Cheong, Marc van Kreveld, and Mark~H.
  Overmars.
\newblock {\em Computational Geometry: Algorithms and Applications}.
\newblock Springer-Verlag, third edition, 2008.

\bibitem{BuchinLoMoMu11}
Kevin Buchin, Maarten L{\"{o}}ffler, Pat Morin, and Wolfgang Mulzer.
\newblock Preprocessing imprecise points for {D}elaunay triangulation:
  Simplified and extended.
\newblock {\em Algorithmica}, 61(3):674--693, 2011.

\bibitem{BuchinMu11}
Kevin Buchin and Wolfgang Mulzer.
\newblock Delaunay triangulations in ${O}(\text{sort}(n))$ time and more.
\newblock {\em J. ACM}, 58(2):6:1--6:27, 2011.

\bibitem{Chan99}
Timothy~M. Chan.
\newblock Geometric applications of a randomized optimization technique.
\newblock {\em Discrete Comput. Geom.}, 22(4):547--567, 1999.

\bibitem{chan_simpler_2016}
Timothy~M. Chan.
\newblock A simpler linear-time algorithm for intersecting two convex polyhedra
  in three dimensions.
\newblock {\em Discrete Comput. Geom.}, 56(4):860--865, December 2016.

\bibitem{ChangLu13}
Hsien-Chih Chang and Hsueh-I Lu.
\newblock Computing the girth of a planar graph in linear time.
\newblock {\em SIAM J. Comput.}, 42(3):1077--1094, 2013.

\bibitem{Chazelle92}
Bernard Chazelle.
\newblock An optimal algorithm for intersecting three-dimensional convex
  polyhedra.
\newblock {\em SIAM J. Comput.}, 21(4):671--696, 1992.

\bibitem{ChazelleMu11}
Bernard Chazelle and Wolfgang Mulzer.
\newblock Computing hereditary convex structures.
\newblock {\em Discrete Comput. Geom.}, 45(4):796--823, 2011.

\bibitem{CormenLeRiSt09}
Thomas~H. Cormen, Charles~E. Leiserson, Ronald~L. Rivest, and Clifford Stein.
\newblock {\em Introduction to algorithms}.
\newblock MIT Press, third edition, 2009.

\bibitem{EvansGaLoPo16}
William~S. Evans, Mereke van Garderen, Maarten L{\"o}ffler, and Valentin
  Polishchuk.
\newblock Recognizing a {DOG} is hard, but not when it is thin and unit.
\newblock In {\em Proc. 8th FUN}, pages 16:1--16:12, 2016.

\bibitem{LeGall14}
Fran{\c{c}}ois {\SortNoop{Gall}}Le~Gall.
\newblock Powers of tensors and fast matrix multiplication.
\newblock In {\em Proc. 39th Internat. Symp. Symbolic and Algebraic Comput.
  (ISSAC)}, pages 296--303, 2014.

\bibitem{har-peled_geometric_2008}
Sariel Har-Peled.
\newblock {\em Geometric approximation algorithms}.
\newblock American Mathematical Society, 2008.

\bibitem{ItaiRo78}
Alon Itai and Michael Rodeh.
\newblock Finding a minimum circuit in a graph.
\newblock {\em SIAM J. Comput.}, 7(4):413--423, 1978.

\bibitem{KaplanMuRoSe15}
Haim Kaplan, Wolfgang Mulzer, Liam Roditty, and Paul Seiferth.
\newblock Spanners and reachability oracles for directed transmission graphs.
\newblock In {\em Proc. 31st Int. Sympos. Comput. Geom. (SoCG)}, pages
  156--170, 2015.

\bibitem{KaplanMuRoSe18}
Haim Kaplan, Wolfgang Mulzer, Liam Roditty, and Paul Seiferth.
\newblock Spanners for directed transmission graphs.
\newblock {\em SIAM J. Comput.}, 47(4):1585--1609, 2018.

\bibitem{lacki_min-cuts_2011}
Jakub {\L}{\c a}cki and Piotr Sankowski.
\newblock Min-cuts and shortest cycles in planar graphs in ${O}(n \log\log n)$
  time.
\newblock In {\em Proc. 19th Annu. European Sympos. Algorithms (ESA)}, pages
  155--166, 2011.

\bibitem{MozesNiNuWe18}
Shay Mozes, Kirill Nikolaev, Yahav Nussbaum, and Oren Weimann.
\newblock Minimum cut of directed planar graphs in ${O}(n \log \log n)$ time.
\newblock In {\em Proc. 29th Annu. ACM-SIAM Sympos. Discrete Algorithms
  (SODA)}, pages 477--494, 2018.

\bibitem{PapadimitriouYa81}
Christos~H. Papadimitriou and Mihalis Yannakakis.
\newblock The clique problem for planar graphs.
\newblock {\em Inform. Process. Lett.}, 13(4/5):131--133, 1981.

\bibitem{Polishchuk17}
Valentin Polishchuk.
\newblock Personal communication.
\newblock 2017.

\bibitem{RodittyVa11}
Liam Roditty and Virginia Vassilevska~Williams.
\newblock Minimum weight cycles and triangles: Equivalences and algorithms.
\newblock In {\em Proc. 52nd Annu. IEEE Sympos. Found. Comput. Sci. (FOCS)},
  pages 180--189, 2011.

\bibitem{WillardLu85}
Dan~E. Willard and George~S. Lueker.
\newblock Adding range restriction capability to dynamic data structures.
\newblock {\em J. ACM}, 32(3):597--617, 1985.

\bibitem{WiWi18}
Virginia~Vassilevska Williams and R.~Ryan Williams.
\newblock Subcubic equivalences between path, matrix, and triangle problems.
\newblock {\em J. ACM}, 65(5):27:1--27:38, 2018.

\bibitem{Yu15}
Huacheng Yu.
\newblock An improved combinatorial algorithm for {B}oolean matrix
  multiplication.
\newblock In {\em Proc. 42nd Internat. Colloq. Automata Lang. Program.
  (ICALP)}, pages 1094--1105, 2015.

\bibitem{Yuster11}
Raphael Yuster.
\newblock A shortest cycle for each vertex of a graph.
\newblock {\em Inform. Process. Lett.}, 111(21-22):1057--1061, 2011.

\end{thebibliography}
\end{document}